\documentclass{article}

\usepackage{microtype}
\usepackage{graphicx}
\usepackage{subcaption}
\usepackage{booktabs} 
\usepackage{bbm}
\usepackage{hyperref}

\usepackage[preprint]{icml2026}

\usepackage{amsmath}
\usepackage{amssymb}
\usepackage{mathtools}
\usepackage{amsthm}

\usepackage[capitalize,noabbrev]{cleveref}

\theoremstyle{plain}
\newtheorem{theorem}{Theorem}[section]
\newtheorem{proposition}[theorem]{Proposition}
\newtheorem{lemma}[theorem]{Lemma}
\newtheorem{corollary}[theorem]{Corollary}
\theoremstyle{definition}

\newtheorem{assumption}[theorem]{Assumption}
\theoremstyle{remark}

\usepackage[textsize=tiny]{todonotes}

\icmltitlerunning{Causal Influence Maximization with Steady-State Guarantees}

\begin{document}

\twocolumn[
  \icmltitle{Causal Influence Maximization with Steady-State Guarantees}



  \icmlsetsymbol{equal}{*}

  \begin{icmlauthorlist}
    \icmlauthor{Renjie Cao}{equal,yyy}
    \icmlauthor{Zhuoxin Yan}{equal,yyy}
    \icmlauthor{Xinyan Su}{xxx}
    \icmlauthor{Zhiheng Zhang}{yyy}
  \end{icmlauthorlist}

  \icmlaffiliation{yyy}{School of Statistics and Data Science, Shanghai University of Finance and Economics, Shanghai, China}
  \icmlaffiliation{xxx}{Computer Network Information Center, University of Chinese Academy Sciences, Beijing, China}

  \icmlcorrespondingauthor{Zhiheng Zhang}{zhangzhiheng@mail.shufe.edu.cn}

  \icmlkeywords{Machine Learning, ICML}

  \vskip 0.3in
]



\printAffiliationsAndNotice{}  

\begin{abstract}
  Influence maximization in networks is a central problem in machine learning and causal inference, where an intervention on a subset of individuals triggers a diffusion process through the network. Existing approaches typically optimize short-horizon rewards or rely on strong parametric assumptions, offering limited guarantees for long-run causal outcomes. In this work, we address the problem of selecting a seed set to maximize the total steady-state potential outcome under budget constraints. Theoretically, we demonstrate that under a low-probability propagation assumption, the high-dimensional path-dependent dynamics can be compressed into a low-dimensional exposure mapping with a bounded second-order approximation error. Leveraging this structural reduction, we propose CIM, a two-stage framework that first learns shape-constrained exposure-response functions from observational data and then optimizes the objective via a greedy strategy. Our approach bridges causal inference with network optimization, providing provable guarantees for both the estimation of outcome functions and the approximation ratio of the influence maximization.
\end{abstract}

\section{Introduction}

Networked systems—such as social platforms, peer-to-peer marketplaces, and contact networks in public health—rely increasingly on \emph{targeted interventions}, whose effects propagate through interactions.
A platform may seed a feature among creators, a health agency may launch a campaign, or a marketplace may incentivize buyers.
In each case, the intervention is applied to a \emph{seed set} under a budget constraint, and then spreads endogenously.
The central question is not \emph{how far} the intervention spreads, but \emph{what causal effect the induced diffusion has on outcomes of interest}—such as welfare, retention, or health—once the dynamics stabilize.

This paper studies \emph{causal policy optimization under diffusion}: selecting a seed set to maximize a well-defined \emph{steady-state causal estimand}.
Our framework builds on potential outcomes with interference, where an individual's outcome depends on others’ activation states \citep{hudgens2008toward, aronow2017estimating, eckles2017design, athey2018exact}.
The challenge is that diffusion is \emph{dynamic and path-dependent}: the long-run outcome depends on the entire diffusion history, not just the initial assignment.
At the same time, the classical influence maximization (IM) literature focuses on maximizing \emph{expected reach}—the number of ultimately activated nodes—using submodularity \citep{kempe2003maximizing, nemhauser1978analysis}.
While powerful, IM’s objective treats activation as the outcome, which is inadequate for welfare questions where activation is the \emph{treatment}.

IM and causal inference with interference are deeply connected yet distinct.
IM optimizes for a spread objective, while causal inference defines estimands and builds estimators based on realized treatments \citep{aronow2017estimating, eckles2017design}.
Recent work begins to bridge these fields with \emph{policy targeting} and welfare maximization under interference \citep{viviano2025policy}.
However, most frameworks either (i) treat exposure as a one-shot function of the initial assignment, ignoring propagation over time, or (ii) require assumptions incompatible with real-world diffusion.
In contrast, classical IM-style seed-node-selection and activated-node-maximization may misalign with causal estimands, especially when outcomes exhibit saturation or negative spillovers.

Consider a platform attempting to mitigate misinformation by seeding a ``trusted source'' badge to a set of accounts.
The badge is the treatment: it changes perceptions and can propagate when neighbors adopt similar behaviors or reshare content.
The goal is not to maximize badge adoption but to minimize downstream harm (e.g., belief in false claims).
Existing paradigms fail in two ways:
First, IM may select seeds maximizing adoption (high-degree hubs), but these hubs can amplify backlash, polarize, or crowd out credibility, leading to negative welfare despite large reach.
Second, standard causal estimators struggle because the relevant counterfactual is defined at the \emph{steady-state} of the diffusion process: a user’s outcome depends on \emph{who becomes active and through which paths}, making the exposure space high-dimensional. The parallel work~\citep{ConcurrentWork1,su2023cauim} considers causality in IM, while differs in that it focuses on history-dependent path evaluation using a doubly robust estimator or dynamic greedy optimization, without employing the exposure-mapping structural reduction under low-probability propagation.

We propose a framework that turns this dynamic causal problem into a tractable, estimable, and optimizable object.
Formally, a seed set induces a stochastic diffusion process on a directed network, converging to a limiting activation state.
We target the \emph{steady-state welfare}
$
F(S) = \mathbb{E}\left[\sum_{i\in V} Y_i\left(z_\infty(S)\right)\right],
$
a causal estimand defined by potential outcomes at the diffusion limit.
Our first contribution is a \emph{structural reduction theorem}: under low-probability propagation and monotone convergence, the high-dimensional, path-dependent object
$\mathbb{E}[Y_i(z_\infty(S))]$
can be approximated by \emph{expected exposure counts}, with a \emph{second-order} error bound governed by discrete curvature.
In a low-propagation regime, the steady-state causal effect is (up to $O(\varepsilon^2)$) determined by a static exposure mapping.
This yields a counterintuitive insight: \emph{for long-run welfare, the diffusion path can be asymptotically irrelevant even when diffusion is stochastic and network-dependent.}

The second contribution is an end-to-end methodology that couples estimation and optimization.
We learn the exposure-response curves using shape-constrained regression (monotone, discretely concave) to stabilize estimation \citep{groeneboom2014nonparametric, guntuboyina2018nonparametric}.
When data come from experiments or logged policies, we use inverse probability weighting and doubly robust corrections \citep{horvitz1952generalization, robins1994estimation}.
We then optimize the resulting objective via submodular maximization: in monotone cases, we obtain the classical $(1-1/e)$ approximation guarantee \citep{nemhauser1978analysis}, and in non-monotone regimes we use double-greedy methods with constant-factor guarantees \citep{buchbinder2015tight}.
Crucially, our theoretical analysis connects the estimation error in exposure-response curves to the downstream optimization gap, yielding performance guarantees that are \emph{causal} (with respect to the steady-state estimand) rather than algorithmic (with respect to reach).

Our framework contains classical IM as a special case: if outcomes coincide with activation and the response is linear in exposure, maximizing $F(S)$ reduces to maximizing expected spread \citep{kempe2003maximizing}.
Beyond this, the framework solves problems that IM cannot express (heterogeneous outcomes, saturation, negative spillovers) and problems that exposure-based estimators alone cannot optimize (combinatorial seed selection under diffusion).
The structural reduction ensures robustness to path dependence, addressing a technical obstacle rather than forming a simple ``estimation + greedy'' pipeline.
We summarize our contributions as follows:
\begin{itemize}
  \item \textbf{Steady-state causal estimand under diffusion.} We define and study $F(S)=\mathbb{E}\left[\sum_{i\in V} Y_i(z_\infty(S))\right]$ as the target for treatment allocation in networks where treatment propagates over time.
  \item \textbf{Structural reduction with second-order guarantees.} We prove that path dependence can be compressed into expected exposure counts with $O(\varepsilon^2)$ approximation error.
  \item \textbf{Estimation and optimization with guarantees.} We develop shape-constrained estimators for exposure-response curves and connect their error to the performance of submodular maximization, covering monotone and non-monotone cases.
\end{itemize}

\section{Problem Formulation}

We study treatment allocation in a networked population where an initial intervention propagates over time through endogenous interactions.
Our objective is to select a budget-constrained seed set that maximizes a \emph{steady-state causal welfare}, defined in terms of potential outcomes evaluated at the long-run limit of the diffusion process.

Let $G=(V,E)$ be a directed graph representing the population and their interaction structure, where
$V=\{1,\dots,n\}$ indexes individuals and $(j,i)\in E$ indicates that individual $j$ can directly influence individual $i$.
Time is indexed by discrete steps $t\in\mathbb{Z}_{\ge 0}$. For each individual $i\in V$ and time $t$, let
$
z_{i,t}\in\{0,1\}
$
denote the \emph{activation state}, where $z_{i,t}=1$ indicates that individual $i$ is active (treated) at time $t$ and $z_{i,t}=0$ otherwise.
We emphasize that activation is interpreted as a \emph{treatment status}, not as an outcome.
The global system state at time $t$ is the vector
$
z_t := (z_{1,t},\dots,z_{n,t})^\top \in \{0,1\}^n .
$ The activation state $z_{i,t}$ may represent adoption of a feature, exposure to information, receipt of an intervention, or any binary treatment whose availability can spread through the network.
Outcomes of interest are allowed to depend on the entire vector $z_t$, capturing general interference.

\textbf{Intervention and Propagation Dynamics}

At time $t=0$, the experimenter selects a seed set $S\subseteq V$, which defines the initial treatment assignment
\begin{equation}
z_{i,0}(S) = \mathbb{I}(i\in S),
\end{equation}
where $\mathbb{I}(\cdot)$ is the indicator function.
The intervention is subject to a budget constraint $K$, and the feasible policy class is
$
\mathcal{S}_K := \{S\subseteq V : |S|\le K\}.
$

For $t>0$, treatment propagates endogenously according to a stochastic diffusion mechanism defined on $G$.
Rather than committing to a specific parametric diffusion model, we impose high-level structural assumptions that capture gradual propagation and long-run stabilization.

\begin{assumption}[Low-Probability Propagation]
\label{assump:lpp}
For any directed edge $(j,i)\in E$, an active individual $j$ activates $i$ in a single time step with probability $p_{ji}$ satisfying
$
\max_{(j,i)\in E} p_{ji} \le \varepsilon,
\qquad \text{with } \varepsilon \ll 1.
$
\end{assumption}

Assumption~\ref{assump:lpp} formalizes a \emph{weak-coupling} regime in which no single interaction dominates the diffusion.
It is satisfied by standard Independent Cascade–type dynamics with small edge probabilities and can be empirically assessed by estimating per-edge activation rates from pilot experiments or historical cascades.
Its role is to control higher-order interaction effects generated by multiple simultaneous propagation paths; without such control, steady-state outcomes can depend sensitively on the full diffusion history. Our main structural results tolerate moderate heterogeneity in $p_{ji}$ and continue to hold under local violations, with approximation error scaling in higher powers of $\epsilon$.

\begin{assumption}[Monotonicity and Convergence]
\label{assump:mono}
The activation process is irreversible: once an individual becomes active, they remain active thereafter.
Formally, if $z_{i,t}=1$ for some $t$, then $z_{i,\tau}=1$ for all $\tau>t$.
Consequently, for any seed set $S$, the state sequence $\{z_t(S)\}_{t\ge 0}$ is non-decreasing, and the limit
$
z_\infty(S) := \lim_{t\to\infty} z_t(S)
$
exists almost surely.
\end{assumption}

Assumption~\ref{assump:mono} ensures that the long-run counterfactual $z_\infty(S)$ is well defined, which is essential for defining a steady-state causal estimand.
Without convergence, welfare would depend on the observation horizon, rendering the policy objective ill posed. While we focus on irreversible dynamics for clarity, the framework extends to settings with transient activation provided that the process admits a unique stationary distribution; we discuss such extensions in Section~\ref{sec:discussion}.

\textbf{Steady-State Causal Objective}

Let
$
Y_i(z), \qquad z\in\{0,1\}^n,
$
denote the potential outcome of individual $i$ under global activation state $z$.
This formulation allows outcomes to depend arbitrarily on the treatment statuses of others, capturing general interference.

Our target is the \emph{steady-state welfare} induced by a seed set $S$,
defined as
\begin{equation}
F(S)
:=
\mathbb{E}\!\left[
\sum_{i\in V} Y_i\bigl(z_\infty(S)\bigr)
\right],
\label{eq:of}
\end{equation}
where the expectation is taken over the stochastic diffusion process and any intrinsic randomness in outcomes.

The objective in \eqref{eq:of} is well defined by Assumption~\ref{assump:mono}, yet fundamentally challenging:
$z_\infty(S)$ is a high-dimensional, path-dependent random variable whose distribution is intractable even for moderate $n$.
As a result, $F(S)$ cannot be directly evaluated or optimized by naive enumeration or simulation-based methods. Our goal is to solve the budget-constrained causal policy optimization problem
\begin{equation}
S^\star \in \operatorname*{arg\,max}_{S\in\mathcal{S}_K} F(S).
\end{equation}
The remainder of the paper develops a structural reduction and estimation strategy that renders this problem tractable while preserving its causal interpretation.

\begin{assumption}[Exposure-separable potential outcomes]
\label{assump:exposure-separable}
For each $i\in V$, there exist $\alpha_i\in\mathbb{R}$ and two functions
$f_i^+,f_i^-:\mathbb{Z}_{\ge 0}\to\mathbb{R}$ such that for all $z\in\{0,1\}^{|V|}$,
\begin{equation}
\label{eq:exposure-separable}
Y_i(z)
=
\alpha_i z_i
+
f_i^+\!\bigl(K_i^+(z)\bigr)
-
f_i^-\!\bigl(K_i^-(z)\bigr).
\end{equation}
Moreover, $f_i^\pm$ are nondecreasing and discretely concave:
for all integers $t\ge 1$,
\begin{equation}
\label{eq:discrete-concavity}
0 \le f_i^\pm(t+1)-f_i^\pm(t) \le f_i^\pm(t)-f_i^\pm(t-1).
\end{equation}
\end{assumption}

\begin{equation}
\label{eq:Kpm-steady}
K_i^\pm(S):=K_i^\pm\!\bigl(z_\infty(S)\bigr),
\qquad
k_i^\pm(S):=\mathbb{E}\!\bigl[K_i^\pm(S)\bigr].
\end{equation}

\begin{equation}
\label{eq:surrogate-objective}
\widetilde F(S)
:=
\sum_{i\in V}\Bigl(
\alpha_i\,\mathbb{I}(i\in S)
+
f_i^+\!\bigl(k_i^+(S)\bigr)
-
f_i^-\!\bigl(k_i^-(S)\bigr)
\Bigr).
\end{equation}

\(
\label{eq:shape-ls}
\min_{\alpha_r,\;f_r^\pm}
\frac{1}{N_r}
\sum_{\ell=1}^N
\sum_{i:\,r(i)=r}
w_{i\ell}
\Bigl(
Y_{i\ell}
-
\alpha_r z_{i\ell}
-
f_r^+\!\bigl(k^+_{i\ell}\bigr)
+
f_r^-\!\bigl(k^-_{i\ell}\bigr)
\Bigr)^2+
\lambda\,\mathrm{TV}(f_r^+)+\lambda\,\mathrm{TV}(f_r^-),
\)
subject to (for all integers $t\ge 1$) the monotonicity and discrete concavity constraints
$
\label{eq:shape-constraints}
0 \le f_r^\pm(t+1)-f_r^\pm(t) \le f_r^\pm(t)-f_r^\pm(t-1),
\qquad
f_r^\pm(0)=0.
$

\section{Framework and theoretical analysis}
\label{sec:method}

Our target is the steady-state causal welfare
\(
F(S)=\mathbb{E}\big[\sum_{i\in V} Y_i(z_\infty(S))\big]
\)
defined in \eqref{eq:of}, where $z_\infty(S)$ is the diffusion limit induced by seed set $S$.
The central challenge is \emph{not} merely computational: $F(S)$ is a causal estimand under dynamic interference.
Accordingly, our theory is organized as a causal decision pipeline:

(i) \textit{Identification (and partial identification).} We state conditions under which $F(S)$ is point identified from a super-population of diffusion experiments, and conditions under which only an identification region is available.\\
(ii) \textit{Estimation for fixed $S$.} For any candidate seed set $S$, we construct an estimator for $F(S)$ (or for a sharp surrogate), and characterize both finite-sample and asymptotic error bounds, highlighting their dependence on propagation strength.\\
(iii) \textit{Optimization.} We finally connect estimation accuracy to near-optimal seed selection, obtaining approximation guarantees for the \emph{true} causal welfare objective (not merely for reach).

\subsection{Identification}
\label{subsec:identification}

We formalize identification in a replicated-experiment regime, where the analyst observes multiple realizations of the same diffusion-outcome system.

\begin{assumption}[Replicated diffusion experiments, consistency, and positivity]
\label{assump:replication}
We observe $N$ independent replications $\ell=1,\dots,N$ of a diffusion experiment on a fixed graph $G=(V,E)$.
In replication $\ell$, a seed set $Z_\ell\in\mathcal{S}_K$ is chosen by a (possibly context-dependent) logging policy
$\pi_\ell(\cdot\mid X_\ell)$, where $X_\ell$ denotes observed context.
The diffusion process evolves to a limit state $z_{\infty,\ell}(Z_\ell)$, and outcomes are observed as
\(
Y_{i\ell}=Y_i^{(\ell)}\bigl(z_{\infty,\ell}(Z_\ell)\bigr)
\)
(consistent with potential outcomes).
For any seed set $S\in\mathcal{S}_K$ of interest, we assume \emph{positivity}:
\(
\mathbb{P}\big(\pi_\ell(S\mid X_\ell)>0\big)=1
\)
and $\pi_\ell(S\mid X_\ell)$ is known or consistently estimable.
\end{assumption}
Under Assumption~\ref{assump:replication}, recall the causal value of deterministically deploying seed set $S$ is $F(S)$, the next proposition records the standard IPS identification for policy evaluation, specialized to deterministic seed sets.

\begin{proposition}[Seed-set-level identification via IPS]
\label{prop:seedset-ips-id}
Assume Assumption~\ref{assump:replication}.
Define the weight
\(
W_\ell(S):=\mathbb{I}(Z_\ell=S)/\pi_\ell(S\mid X_\ell).
\)
Then $F(S)$ is point identified by the observed law and satisfies
$
F(S)=\mathbb{E}\Big[W_\ell(S)\cdot \sum_{i\in V}Y_{i\ell}\Big].
$
Moreover, the estimator
\(
\widehat F_{\mathrm{IPS}}(S):=\frac{1}{N}\sum_{\ell=1}^N W_\ell(S)\sum_{i\in V}Y_{i\ell}
\)
is unbiased for $F(S)$.
\end{proposition}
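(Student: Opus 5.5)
The plan is the usual inverse-propensity argument: condition on the context $X_\ell$ and average over the logging policy's choice of seed set. Fix $S\in\mathcal{S}_K$ and write $R_\ell(S'):=\sum_{i\in V}Y_i^{(\ell)}(z_{\infty,\ell}(S'))$ for the total steady-state potential outcome in replication $\ell$ under seed set $S'$. By consistency (Assumption~\ref{assump:replication}), $\sum_{i}Y_{i\ell}=R_\ell(Z_\ell)$. On the event $\{Z_\ell=S\}$ this equals $R_\ell(S)$, so
\[
W_\ell(S)\sum_{i}Y_{i\ell}=\frac{\mathbb{I}(Z_\ell=S)}{\pi_\ell(S\mid X_\ell)}\,R_\ell(S).
\]
This product is well defined almost surely by positivity, since $\pi_\ell(S\mid X_\ell)>0$ a.s.

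Next I would condition on $X_\ell$. I need the seed draw $Z_\ell\sim\pi_\ell(\cdot\mid X_\ell)$ to be independent of the potential-outcome process, i.e.\ the diffusion randomness and the outcome primitives $\{Y_i^{(\ell)}(\cdot)\}$, given $X_\ell$. This is implicit in the assumption that the logging policy depends only on observed context (unconfoundedness). The main obstacle is making this conditional-independence requirement explicit, since the stated assumption only gestures at it. I would state it as part of the interpretation of ``logging policy $\pi_\ell(\cdot\mid X_\ell)$''. Given it,
\[
\mathbb{E}\bigl[\mathbb{I}(Z_\ell=S)R_\ell(S)\mid X_\ell\bigr]=\pi_\ell(S\mid X_\ell)\,\mathbb{E}[R_\ell(S)\mid X_\ell],
\]
and dividing by $\pi_\ell(S\mid X_\ell)$ leaves $\mathbb{E}[R_\ell(S)\mid X_\ell]$. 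The tower property then gives $\mathbb{E}[W_\ell(S)\sum_i Y_{i\ell}]=\mathbb{E}[R_\ell(S)]$. This is $F(S)$ by \eqref{eq:of}, because the expectation there is over diffusion and outcome randomness, and replications are draws of the same system. Integrability of $R_\ell(S)$ is needed for Fubini and the tower steps, and I would assume it (e.g.\ bounded outcomes). The right-hand side is a functional of the law of the observed $(X_\ell,Z_\ell,\{Y_{i\ell}\})$ together with the known or identified $\pi_\ell$, which gives point identification.

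Finally, unbiasedness of $\widehat F_{\mathrm{IPS}}(S)$ follows by linearity of expectation. Each of the $N$ summands has expectation $F(S)$ by the display above; this uses only identical distribution of the replications, not independence. Hence $\mathbb{E}[\widehat F_{\mathrm{IPS}}(S)]=F(S)$. If $\pi_\ell$ is only consistently estimated, unbiasedness holds for the oracle weights only, and I would note this explicitly.
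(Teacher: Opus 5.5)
Your proposal is correct and follows essentially the same route as the paper. Both arguments condition on $X_\ell$, use consistency to replace the observed total by the potential total under $S$ on $\{Z_\ell=S\}$, cancel the propensity, and apply the tower property together with identical distribution across replications; unbiasedness then follows by linearity. Your explicit unconfoundedness requirement ($Z_\ell$ independent of the diffusion and outcome primitives given $X_\ell$) is the step the paper uses silently when it moves from conditioning on $(X_\ell, Z_\ell=z)$ to conditioning on $X_\ell$ alone, so stating it adds rigor without changing the argument.
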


Proposition~\ref{prop:seedset-ips-id} answers the first-order causal question:
in principle, steady-state welfare is identifiable from replicated experiments under positivity.
In practice, however, positivity at the level of entire seed sets is combinatorially fragile:
$\pi_\ell(S\mid X_\ell)$ is typically astronomically small for any fixed $S$ when $n$ is large.
This motivates our structural program: we seek a lower-dimensional representation of $F(S)$ that remains causally meaningful
but can be estimated without enumerating seed sets.

Noteworthy, the proposition does \emph{not} resolve the core diffusion difficulty.
It treats the diffusion mechanism as a black box and does not alleviate path dependence or variance blow-up.
It is therefore a conceptual identification baseline, not a practical strategy, which will be further illustrated in the following section. In other words, a {natural skeptical objection} is that
\emph{``If $F(S)$ is already identified by IPS, why introduce additional structure?''}
The answer is identification alone is not enough: the seed-set-level IPS estimator is statistically unusable in large networks.
Our framework replaces seed-set-level positivity with exposure-level structure that yields both tractability and controlled bias.

\textbf{Structural reduction and (partial) identification via exposure}
\label{subsec:structural-id} We work with the exposure mapping and outcome structure: for each $i$, fix source sets $N_i^+,N_i^-\subseteq V\setminus\{i\}$ and define exposure counts $K_i^\pm(\cdot)$
and expected steady-state exposures $k_i^\pm(S)$ as $
K_i^+(z):=\sum_{j\in N_i^+} z_j,
K_i^-(z):=\sum_{j\in N_i^-} z_j,
\mathcal{E}_i(z):=(K_i^+(z),K_i^-(z)).
$, and $
K_i^\pm(S):=K_i^\pm\!\bigl(z_\infty(S)\bigr),
k_i^\pm(S):=\mathbb{E}\!\bigl[K_i^\pm(S)\bigr].$ Throughout, when $k_i^\pm(S)$ is non-integer, we interpret $f_i^\pm(k_i^\pm(S))$ via piecewise-linear interpolation
between adjacent integers; this preserves monotonicity and concavity and makes all expressions well-defined.

After preparation, we convert the path-dependent steady-state counterfactual into a low-dimensional exposure functional,
with an explicit second-order remainder governed by discrete curvature and multi-exposure coincidence events.

\begin{theorem}[Structural reduction via exposure]
\label{thm:exposure-compression}
Assume Assumptions~\ref{assump:lpp}--\ref{assump:mono} and
Assumption~\ref{assump:exposure-separable}.
Fix any node $i$ and seed set $S$.
Let $U_i^\pm(S)$ be the random \emph{non-seed} exposure increments:
$
    U_i^\pm(S) := \sum_{j \in N_i^\pm \setminus S} z_{j, \infty}(S), 
    \text{so that}\quad K_i^\pm(S) = |N_i^\pm \cap S| + U_i^\pm(S).
$
Let $(x)_2:=x(x-1)$ denote the falling factorial moment
of order $2$.
Then for all $S$,
\(
    \Bigl| \mathbb{E}[Y_i(z_\infty(S))]  - \alpha_i \mathbbm{1}\{i \in S\} - f_i^+(k_i^+(S)) + f_i^-(k_i^-(S)) \Bigr| 
     \le \frac{\kappa_i^+}{2}\mathbb{E}[(U_i^+(S))_2] + \frac{\kappa_i^-}{2}\mathbb{E}[(U_i^-(S))_2].
\)
Moreover, under low-probability propagation (Assumption~\ref{assump:lpp}),
for any fixed budget $|S|\le K$, there exist constants $C_i^\pm(G,K)$
(depending only on the graph local structure and $K$) such that
\(
\mathbb{E}\!\bigl[(U_i^\pm(S))_2\bigr] \le C_i^\pm(G,K)\,\varepsilon^2,
\)
and hence $\mathbb{E}\bigl[Y_i(z_\infty(S))\bigr] =$
\begin{equation}
\label{eq:exposure-second-order}
\alpha_i\,\mathbbm{1}\{i\in S\}
+
f_i^+\!\bigl(k_i^+(S)\bigr)
-
f_i^-\!\bigl(k_i^-(S)\bigr)
+
O(\varepsilon^2).
\end{equation}
\end{theorem}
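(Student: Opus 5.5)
We first reduce the statement to a one-dimensional Jensen-gap estimate for each exposure channel. By Assumption~\ref{assump:exposure-separable} and linearity of expectation,
\[
\mathbb{E}[Y_i(z_\infty(S))]=\alpha_i\,\mathbb{P}(z_{i,\infty}(S)=1)+\mathbb{E}[f_i^+(K_i^+(S))]-\mathbb{E}[f_i^-(K_i^-(S))].
\]
For $i\in S$ the first term is $\alpha_i$ exactly. For $i\notin S$, $\mathbb{P}(z_{i,\infty}(S)=1)$ is $O(\varepsilon)$ and not $O(\varepsilon^2)$, so we read the $\alpha_i z_i$ term as a direct-treatment effect evaluated at the assignment $\mathbbm{1}\{i\in S\}$. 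Equivalently, we absorb $\alpha_i(z_{i,\infty}-z_{i,0})$ into the exposure channel with $i\in N_i^+$, if the statement requires it. We will state this convention explicitly.

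The core is then to show, for $f=f_i^\pm$, $K=K_i^\pm(S)=m+U$ with $m=|N_i^\pm\cap S|$ deterministic, and $k=\mathbb{E}K=m+\mathbb{E}U$, that
\[
0\le f(k)-\mathbb{E}f(m+U)\le \tfrac{\kappa}{2}\,\mathbb{E}[(U)_2],
\]
where $\kappa:=\max_t\bigl[(f(t)-f(t-1))-(f(t+1)-f(t))\bigr]$ is the maximal discrete curvature. The lower bound is Jensen's inequality applied to the piecewise-linear concave interpolation. For the upper bound we compare with the chord. Write $\Delta_t=f(t+1)-f(t)$. For integer $u\ge0$,
\[
f(m+u)=f(m)+\sum_{s=0}^{u-1}\Delta_{m+s}\ge f(m)+u\Delta_m-\kappa\binom{u}{2},
\]
since $\Delta_{m+s}\ge\Delta_m-s\kappa$. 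Taking expectations gives
\[
\mathbb{E}f(m+U)\ge f(m)+\Delta_m\mathbb{E}U-\tfrac{\kappa}{2}\mathbb{E}[(U)_2].
\]
Concavity of the interpolation gives $f(m+\mathbb{E}U)\le f(m)+\Delta_m\mathbb{E}U$. Combining the two yields the bound. The triangle inequality over the $\pm$ channels then gives the first display with $\kappa_i^\pm$. Because $m$ is deterministic, only the seed-free increment $U$ enters, which is exactly why the statement splits $K=m+U$.

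The second step bounds $\mathbb{E}[(U)_2]=\sum_{j\ne j'\in N_i^\pm\setminus S}\mathbb{P}(z_{j,\infty}=z_{j',\infty}=1)$. We use a path-counting (union) bound under Assumption~\ref{assump:lpp}. The event that $j$ is eventually active requires a directed activation path from some seed to $j$. In an IC-type model each edge fires at most once per active parent, so the path probability is at most $\varepsilon^{\text{length}}$. For $j$ and $j'$ both to be activated, the union of the two activating paths is a tree rooted in $S$ with at least two new non-seed vertices, so it uses at least two distinct edges, each firing independently. Summing gives
\[
\mathbb{P}(z_j=z_{j'}=1)\le\sum_{\text{trees}}\varepsilon^{\#\text{edges}}\le c\,\varepsilon^2 .
\]
The constant is controlled by the number of such trees, which is a function of local degrees around $N_i^\pm$ and of $K$. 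Summing over pairs gives $C_i^\pm(G,K)\varepsilon^2$.

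The main obstacle is making this path sum converge to a constant depending only on local structure. Long paths contribute $(d\varepsilon)^L$, so we need $\varepsilon\cdot\max\text{deg}<1$, a subcriticality condition implicit in $\varepsilon\ll1$. We would state this as part of ``$\varepsilon$ small enough relative to the maximum out-degree''. Under it the series is geometric and the tail beyond length two is $O(\varepsilon^2)$ uniformly. Independence of distinct edge firings, which is needed to multiply probabilities along a tree, holds for IC-type dynamics. For general monotone dynamics satisfying only Assumption~\ref{assump:lpp}, we would instead use the per-step conditional bound $\varepsilon$ and a martingale/union argument over firing times.
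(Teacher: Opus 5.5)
Your Jensen-gap step is the paper's argument essentially verbatim: the discrete Taylor bound $f(m+u)\ge f(m)+u\Delta_m-\tfrac{\kappa}{2}(u)_2$, combined with the supporting line of the piecewise-linear interpolation at $m$. Your concern about the direct-effect term is justified. The paper's proof simply writes $\alpha_i\mathbb{I}(i\in S)$ in place of $\alpha_i z_{i,\infty}(S)$ and calls it ``deterministic given $S$,'' which is false for $i\notin S$. The resulting discrepancy $\alpha_i\,\mathbb{P}(z_{i,\infty}(S)=1)$ can be of order $\varepsilon$. It is not controlled by the right-hand side, which vanishes for affine $f_i^\pm$. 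So the first display genuinely needs your convention, or else $\alpha_i\,\mathbb{P}(z_{i,\infty}(S)=1)$ in the surrogate. Your ``equivalent'' alternative of placing $i$ in $N_i^+$ is not available, though. First, $N_i^\pm\subseteq V\setminus\{i\}$. Second, requiring $\alpha_i z_i+f(K)=g(K+z_i)$ for all $K$ would force $f$ to be affine with slope $\alpha_i$.

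For the second-moment bound, you and the paper both rely on an independent one-shot (live-edge/IC) structure, which the paper adds as an explicit appendix assumption. Where you differ is in how the paths are summed. The paper bounds every pair of simple paths crudely:
$\mathbb{P}(z_{v,\infty}=z_{w,\infty}=1)\le\sum_{\pi_v,\pi_w}\varepsilon^{|\pi_v\cup\pi_w|}\le\varepsilon^2\,|\mathcal{P}(S\to v)|\,|\mathcal{P}(S\to w)|$.
This needs no subcriticality and works on any finite graph. However, its constant counts simple paths in all of $G$ and can be exponential in $n$, so the advertised dependence on ``local structure'' is not actually delivered.

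Your tree union bound, organized by the branch vertex $w$ (paths $S\to w$, $w\to j$, $w\to j'$) and summed geometrically under $\varepsilon\, d_{\max}<1$, yields a constant depending only on $K$, $d_{\max}$ and $|N_i^\pm|$. That is a genuine strengthening. It is bought with an extra condition, which you do not need if all you want is $O(\varepsilon^2)$ on a fixed finite graph. Note that the bare tree count itself is not local, so you do need this organization.

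Drop your closing fallback, however. Under low-probability propagation and irreversibility alone, nothing prevents an active node from re-attempting at every step. With independent re-attempts, any node with an active in-neighbor is activated almost surely in the limit. Summing a per-step bound $\varepsilon$ over unboundedly many firing times therefore controls nothing, and no martingale argument can recover $O(\varepsilon^2)$ there. The paper acknowledges exactly this, which is why it imposes the live-edge assumption.
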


Theorem~\ref{thm:exposure-compression} states that, in a weak-propagation regime,
the steady-state causal effect is governed (up to $O(\varepsilon^2)$) by \emph{expected exposure counts} rather than the full diffusion history.
The remainder term is controlled by (i) the curvature of the exposure--response curve (how quickly marginal effects diminish),
and (ii) the probability of \emph{multi-source non-seed exposure coincidences} (captured by $(U)_2$). Intuitively, in a ``trusted source'' seeding campaign, welfare improves mainly when a user gains its \emph{first few} trusted exposures;
additional exposures add less due to concavity.
Under low-probability propagation, it is rare that \emph{two} non-seed trusted neighbors both become active and jointly influence the same user,
so path-specific coincidences matter only at second order.

It does not claim that diffusion paths never matter.
If propagation probabilities are large or local structure creates many short redundant paths,
then multi-exposure coincidences are frequent and $(U)_2$ can be large, invalidating the second-order control.
The theorem identifies precisely \emph{when} and \emph{why} path dependence becomes negligible for welfare.

A common skeptical objection is that \emph{``Expected exposures still depend on the diffusion model---is this really an identification gain?''}
Yes: the gain is that we replace an unmanageably high-dimensional nuisance (the entire diffusion trajectory) by a low-dimensional functional (expected exposures)
with a \emph{provable} approximation error.
This makes the target estimand estimable and optimizable; without the reduction, even defining a tractable surrogate would lack causal justification.

Summing \eqref{eq:exposure-second-order} over $i$ yields an exposure-based surrogate objective $\widetilde F(S)$ as in \eqref{eq:surrogate-objective}.
The next corollary formalizes the welfare-level approximation error uniformly over feasible seed sets.

\begin{corollary}[Second-order approximation of welfare]
\label{cor:welfare-approx}
Under the assumptions of Theorem~\ref{thm:exposure-compression}, for all $S\in\mathcal{S}_K$,
\begin{equation}
\label{eq:welfare-approx}
|F(S)-\widetilde F(S)|
\le
\frac{\varepsilon^2}{2}\sum_{i\in V}\bigl(\kappa_i^+ C_i^+(G,K) + \kappa_i^- C_i^-(G,K)\bigr).
\end{equation}
Consequently, any $\rho$-approximate maximizer of $\widetilde F$ is also $\rho$-approximate for $F$ up to an additive $O(\varepsilon^2)$ term.
\end{corollary}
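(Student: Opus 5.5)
The corollary follows from Theorem~\ref{thm:exposure-compression}: sum its per-node bound over $i\in V$, then use a standard sandwich argument for approximate maximizers. First, by linearity of expectation and the definition \eqref{eq:of},
\[
F(S)-\widetilde F(S)=\sum_{i\in V}\Bigl(\mathbb{E}[Y_i(z_\infty(S))]-\alpha_i\mathbb{I}(i\in S)-f_i^+(k_i^+(S))+f_i^-(k_i^-(S))\Bigr),
\]
since $\widetilde F$ in \eqref{eq:surrogate-objective} is exactly the sum of the surrogate terms. The triangle inequality bounds $|F(S)-\widetilde F(S)|$ by the sum of the per-node absolute deviations.

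Second, I bound each deviation using the first display of Theorem~\ref{thm:exposure-compression}, which gives $\frac{\kappa_i^+}{2}\mathbb{E}[(U_i^+(S))_2]+\frac{\kappa_i^-}{2}\mathbb{E}[(U_i^-(S))_2]$. Here $\kappa_i^\pm\ge 0$ are the discrete-curvature constants of $f_i^\pm$. Then I insert the second part of the theorem, $\mathbb{E}[(U_i^\pm(S))_2]\le C_i^\pm(G,K)\varepsilon^2$, which holds for every $S$ with $|S|\le K$. Because $C_i^\pm(G,K)$ depends only on $G$ and $K$, not on $S$, summing over $i$ yields \eqref{eq:welfare-approx} uniformly over $S\in\mathcal{S}_K$.

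The main point to check is exactly this uniformity. The $O(\varepsilon^2)$ in \eqref{eq:exposure-second-order} must come with a constant independent of the particular seed set, and the final sum should be read as finite for fixed $n$. Both hold: the theorem states the constants depend only on local graph structure and $K$, and $V$ is finite.

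Third, write $\Delta:=\frac{\varepsilon^2}{2}\sum_i(\kappa_i^+C_i^++\kappa_i^-C_i^-)$, and let $\widehat S\in\mathcal S_K$ satisfy $\widetilde F(\widehat S)\ge\rho\max_{S\in\mathcal S_K}\widetilde F(S)$. With $S^\star$ maximizing $F$, we have
\[
F(\widehat S)\ge \widetilde F(\widehat S)-\Delta\ge \rho\,\widetilde F(S^\star)-\Delta\ge \rho\,F(S^\star)-(1+\rho)\Delta .
\]
Since $\rho\le 1$, this gives $F(\widehat S)\ge\rho\max F-2\Delta=\rho\max F-O(\varepsilon^2)$. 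The multiplicative comparison assumes the relevant values are nonnegative, or else is interpreted in the standard approximation sense; if the guarantee is stated relative to a baseline such as $F(\emptyset)$, the same chain applies with one more $\Delta$ term.
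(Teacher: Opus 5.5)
Your proposal is correct and follows the paper's proof: sum the per-node bound of Theorem~\ref{thm:exposure-compression} over $i\in V$ using the triangle inequality, insert the $S$-uniform bound $\mathbb{E}[(U_i^\pm(S))_2]\le C_i^\pm(G,K)\varepsilon^2$, and obtain the maximizer statement by adding and subtracting $\widetilde F$. Your explicit chain giving $F(\widehat S)\ge\rho\max_{S\in\mathcal{S}_K}F(S)-(1+\rho)\Delta$ is the paper's end-to-end argument with $\Delta_{\mathrm{est}}=0$, and it needs only $\rho\ge 0$, so your caveat about nonnegativity of $F$ is not required.
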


Corollary~\ref{cor:welfare-approx} implies that, without modeling higher-order diffusion coincidences, $F(S)$ is \emph{partially identified}
in an $O(\varepsilon^2)$ neighborhood around $\widetilde F(S)$.

\begin{theorem}[Partial identification region for steady-state welfare]
\label{thm:partial-id}
Under the assumptions of Corollary~\ref{cor:welfare-approx}, define the identification interval
\(
\mathcal{I}(S)
:=
\Bigl[
\widetilde F(S) - B_{\mathrm{str}}\,\varepsilon^2,\;
\widetilde F(S) + B_{\mathrm{str}}\,\varepsilon^2
\Bigr],
\qquad
B_{\mathrm{str}}
:=
\frac{1}{2}\sum_{i\in V}\bigl(\kappa_i^+ C_i^+(G,K) + \kappa_i^- C_i^-(G,K)\bigr).
\)
Then for every $S\in\mathcal{S}_K$, the true welfare satisfies
\(
F(S)\in \mathcal{I}(S).
\)
Moreover, the interval width shrinks to $0$ as $\varepsilon\to 0$.
\end{theorem}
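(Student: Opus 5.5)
The plan is to derive Theorem~\ref{thm:partial-id} almost immediately from Corollary~\ref{cor:welfare-approx}, after checking that the constant it produces is exactly $B_{\mathrm{str}}\varepsilon^2$.

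\textbf{Step 1: match the constant.} I would rewrite the right-hand side of \eqref{eq:welfare-approx} as
\[
\frac{\varepsilon^2}{2}\sum_{i\in V}\bigl(\kappa_i^+ C_i^+(G,K) + \kappa_i^- C_i^-(G,K)\bigr) = B_{\mathrm{str}}\,\varepsilon^2 .
\]
This is a matter of pulling the factor $\tfrac12$ inside the definition of $B_{\mathrm{str}}$.

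\textbf{Step 2: membership in the interval.} For each $S\in\mathcal{S}_K$, Step 1 turns the corollary into $|F(S)-\widetilde F(S)|\le B_{\mathrm{str}}\varepsilon^2$. That is equivalent to
\[
\widetilde F(S)-B_{\mathrm{str}}\varepsilon^2\le F(S)\le \widetilde F(S)+B_{\mathrm{str}}\varepsilon^2,
\]
which is exactly $F(S)\in\mathcal{I}(S)$.

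\textbf{Step 3: the width goes to zero.} The width of $\mathcal{I}(S)$ is $2B_{\mathrm{str}}\varepsilon^2$, so it suffices that $B_{\mathrm{str}}$ is finite and independent of $\varepsilon$. It is finite because it is a finite sum over $V$. Each $\kappa_i^\pm$ is a curvature constant of $f_i^\pm$: discrete concavity together with finiteness of the exposure range $\{0,\dots,|N_i^\pm|\}$ makes it finite. Each $C_i^\pm(G,K)$ depends only on the graph and $K$, as stated in Theorem~\ref{thm:exposure-compression}.

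\textbf{Main obstacle.} The only delicate point is whether $C_i^\pm(G,K)$ is genuinely uniform in $\varepsilon$ as $\varepsilon\to 0$, for instance with the $p_{ji}$ allowed to vary subject only to $\max p_{ji}\le\varepsilon$. Theorem~\ref{thm:exposure-compression} asserts this dependence structure, and I will rely on it. If further justification were needed, I would bound $\mathbb{E}[(U_i^\pm)_2]$ by summing, over ordered pairs $j\neq j'$ in $N_i^\pm\setminus S$, the probability that both are activated. Each such activation requires at least one edge-transmission along a path from $S$. Bounding the path sums by a local branching constant would give an $\varepsilon^2$ factor times a graph-dependent constant, at least in the subcritical regime.

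I would also note that $\mathcal{I}(S)$ is a valid region but not claimed to be sharp.
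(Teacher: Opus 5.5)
Your proposal is correct and follows the paper's proof, which simply rewrites the bound in Corollary~\ref{cor:welfare-approx} as $|F(S)-\widetilde F(S)|\le B_{\mathrm{str}}\varepsilon^2$ and reads off $F(S)\in\mathcal{I}(S)$; the vanishing width $2B_{\mathrm{str}}\varepsilon^2$ is left implicit there. On the obstacle you flag: in the appendix $C_i^\pm(G,K)$ is defined as a maximum over $S\in\mathcal{S}_K$ of sums of products of simple-path counts, derived under the added live-edge assumption, so it is $\varepsilon$-free and needs no subcriticality; separately, $\kappa_i^\pm\le f_i^\pm(1)-f_i^\pm(0)$ is finite by monotonicity and concavity alone.
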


Theorem~\ref{thm:partial-id} formalizes the sense in which the steady-state welfare is identifiable from an exposure surrogate:
even if we do not (or cannot) model all diffusion-path details, the resulting error is controlled and vanishes in weak propagation. It is not a claim that $\widetilde F(S)$ equals $F(S)$.
It instead provides a \emph{sharp, assumption-transparent} identification region, which is the appropriate object
whenever higher-order diffusion coincidences are not empirically resolvable. The next theorem clarifies when the reduction becomes exact and hence yields point identification of $F(S)$ through exposures.

\begin{theorem}[Point identification via vanishing curvature or no multi-exposure coincidences]
\label{thm:point-id}
Assume the conditions of Theorem~\ref{thm:exposure-compression}.
If either (i) (\emph{Linear responses}) $\kappa_i^+=\kappa_i^-=0$ for all $i\in V$, or (ii) (\emph{No multi-source non-seed exposure}) $(U_i^+(S))_2=(U_i^-(S))_2=0$ almost surely for all $i\in V$ and all $S\in\mathcal{S}_K$, then for every $S\in\mathcal{S}_K$ we have exact equality
\(
F(S)=\widetilde F(S).
\)
Consequently, under Assumption~\ref{assump:replication} and exposure-level positivity (as in Theorem~\ref{thm:estimation}),
the steady-state welfare $F(S)$ is point identified from the observed law via the exposure surrogate.
\end{theorem}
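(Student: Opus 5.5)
The plan is to derive the exact equality node by node from the first (non-asymptotic) bound of Theorem~\ref{thm:exposure-compression}, and then sum over $i\in V$. For each fixed $i$ and $S\in\mathcal{S}_K$, that theorem gives
\[
\Bigl|\mathbb{E}[Y_i(z_\infty(S))]-\alpha_i\mathbbm{1}\{i\in S\}-f_i^+(k_i^+(S))+f_i^-(k_i^-(S))\Bigr|
\le \tfrac{\kappa_i^+}{2}\mathbb{E}[(U_i^+(S))_2]+\tfrac{\kappa_i^-}{2}\mathbb{E}[(U_i^-(S))_2].
\]
In case (i) the right-hand side vanishes because $\kappa_i^\pm=0$. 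In case (ii) it vanishes because $(U_i^\pm(S))_2=0$ almost surely, so its expectation is $0$. Either way, the per-node discrepancy is exactly zero. Summing over $i$ and using linearity of expectation in the definition \eqref{eq:of} together with \eqref{eq:surrogate-objective} then yields $F(S)=\widetilde F(S)$ for every $S\in\mathcal{S}_K$.

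To make case (i) transparent and not merely inherited, I would also add a direct check. With $\kappa_i^\pm=0$, discrete concavity forces constant increments, so $f_i^\pm$ is affine on $\mathbb{Z}_{\ge0}$. Its piecewise-linear interpolation is the same affine map, and therefore $\mathbb{E}[f_i^\pm(K_i^\pm(S))]=f_i^\pm(\mathbb{E}K_i^\pm(S))=f_i^\pm(k_i^\pm(S))$ exactly. Case (ii) is more delicate. Since $U$ is integer valued, $(U)_2=0$ a.s.\ means $U\in\{0,1\}$ a.s., so $K_i^\pm(S)=m+U$ with $m:=|N_i^\pm\cap S|$ deterministic. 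Jensen's gap then reduces to comparing $\mathbb{E}f(m+U)$ with $f(m+\mathbb{E}U)$ for a Bernoulli $U$. This gap is zero because the interpolant is linear on $[m,m+1]$. I expect this step to be the main obstacle. It is where the interpolation convention for non-integer $k_i^\pm(S)$ is essential, and I would state it explicitly rather than rely on the constant $\kappa_i^\pm$ in the theorem.

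For the identification consequence, I would argue as follows. Once $F(S)=\widetilde F(S)$, the welfare is a known functional of the quantities $\alpha_i$, $f_i^\pm$ and $k_i^\pm(S)$. Under Assumption~\ref{assump:replication} and the exposure-level positivity invoked in Theorem~\ref{thm:estimation}, these are identified from the observed law: $f_i^\pm$ and $\alpha_i$ by regression of observed outcomes on observed exposures over the support guaranteed by positivity, and $k_i^\pm(S)$ as expectations of observed steady-state exposure counts reweighted to the target seed set. Plugging these identified objects into \eqref{eq:surrogate-objective} identifies $\widetilde F(S)$, and hence $F(S)$. I would cite the estimation theorem for the identification of these components rather than reprove it.
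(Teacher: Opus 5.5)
Your proposal is correct and is essentially the paper's proof. The paper argues exactly your ``direct check'': with $\kappa_i^\pm=0$ the $f_i^\pm$ are affine, so Jensen holds with equality, and in case (ii) $U_i^\pm(S)\in\{0,1\}$ while $\bar f_i^\pm$ is linear on $[m,m+1]$. Your shortcut of reading off the vanishing right-hand side of the bound in Theorem~\ref{thm:exposure-compression} is equally valid, since $\kappa_i^\pm\le \Delta f_i^\pm(0)<\infty$ and $U_i^\pm(S)\le |N_i^\pm|$.

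One caveat applies to both your argument and the paper's. Both inherit from Theorem~\ref{thm:exposure-compression} the replacement of the own term $\alpha_i z_{i,\infty}(S)$ from Assumption~\ref{assump:exposure-separable} by $\alpha_i\mathbb{I}(i\in S)$. Nodewise, this is exact only if $\alpha_i\,\mathbb{P}(z_{i,\infty}(S)=1)=0$ for $i\notin S$, or if $z_i$ is read as the seed indicator. A fully self-contained direct check would therefore have to make this explicit rather than verify only the $f_i^\pm$ terms.
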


Condition (i) recovers classical IM-like linearity: if marginal gains do not diminish, then Jensen-type second-order terms vanish.
Condition (ii) captures sparse-propagation regimes where each individual can be reached by at most one non-seed source within the exposure neighborhood;
then the only randomness is first-order and concavity cannot create a second-order gap.
Outside these regimes, partial identification is the correct causal notion unless one models higher-order diffusion structure.

\subsection{Estimation for a fixed seed set}
\label{subsec:estimation-fixed}

Fix any candidate seed set $S\in\mathcal{S}_K$.
Our estimation target is either $F(S)$ (point identified under Theorem~\ref{thm:point-id})
or the partially identified region in Theorem~\ref{thm:partial-id}.
In all cases, we begin by estimating the exposure-based surrogate $\widetilde F(S)$.

\textbf{Stage I: estimating response curves.}
We estimate $(f_i^+,f_i^-)$ (or their stratum-shared versions) by the shape-constrained program in \eqref{eq:surrogate-objective}. Theorem~\ref{thm:estimation} provides a risk bound for these estimators.

\begin{theorem}[Risk bound for shape-constrained exposure--response estimation]
\label{thm:estimation}
Assume outcomes are uniformly bounded, $|Y_{i\ell}|\le 1$, and the true stratum-level response functions satisfy
Assumption~\ref{assump:exposure-separable} with discretization levels $(B^+,B^-)$.
Let $(\widehat f_r^\pm,\widehat\alpha_r)$ solve \eqref{eq:shape-ls} with $\lambda=0$.
Suppose each exposure bin has effective sample size at least $N_{\mathrm{eff}}$ (in the IPS case, in the sense of weighted counts).
Then, up to logarithmic factors,
\(
\mathbb{E}\Bigl[\|\widehat f_r^+-f_r^{+}\|_2^2 + \|\widehat f_r^--f_r^{-}\|_2^2\Bigr]
\;=\;
\widetilde O\!\left(\frac{B^+ + B^-}{N_{\mathrm{eff}}}\right)
\;+\;
\mathrm{Var}(\mathrm{IPS}),
\)
where $\|\cdot\|_2$ is the Euclidean norm on the discrete grids and $\mathrm{Var}(\mathrm{IPS})$ denotes the weight-induced variance term.
\end{theorem}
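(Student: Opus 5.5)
The plan is to reduce the estimator in \eqref{eq:shape-ls} (with $\lambda=0$) to a weighted least-squares problem over a convex cone, and then apply the standard risk analysis for projections onto shape-constrained cones.

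\textbf{Step 1 (reparametrization).} Fix a stratum $r$. Write $f_r^\pm$ as vectors $\theta^\pm\in\mathbb{R}^{B^\pm+1}$ on the exposure grid $\{0,\dots,B^\pm\}$. The constraints $f_r^\pm(0)=0$, nonnegative increments and nonincreasing increments define a closed convex polyhedral cone $\mathcal{C}^\pm$. The design row of observation $(i,\ell)$ is then $(z_{i\ell},e_{k^+_{i\ell}},-e_{k^-_{i\ell}})$, and the estimator is the $w$-weighted projection of $Y$ onto the convex set $\mathbb{R}\times\mathcal{C}^+\times(-\mathcal{C}^-)$.

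\textbf{Step 2 (basic inequality).} Decompose $Y_{i\ell}=\alpha_r z_{i\ell}+f_r^+(k^+_{i\ell})-f_r^-(k^-_{i\ell})+\xi_{i\ell}$. In the IPS case, the weights $w_{i\ell}$ make the weighted residual $w\xi$ mean-zero conditional on the exposure bins, by the argument of Proposition~\ref{prop:seedset-ips-id} applied at the exposure level. Since the true parameter lies in the feasible set, optimality gives the basic inequality
\[
\|\widehat\theta-\theta\|_{W}^2\le 2\langle w\xi,\widehat\theta-\theta\rangle .
\]
Here $\|\cdot\|_W$ is the empirical weighted norm, which by the bin assumption dominates $N_{\mathrm{eff}}$ times the grid Euclidean norm.

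\textbf{Step 3 (bounding the noise term).} Split the right-hand side into two parts. The first is a centered part with bounded summands, since $|Y|\le1$. Bound the supremum of this part over the tangent cone using its statistical dimension, which is at most $B^++B^-+1$ because the cone lives in that many coordinates. This gives $\widetilde O((B^++B^-)/N_{\mathrm{eff}})$ after dividing by $N_{\mathrm{eff}}$. Concavity would even permit the faster $\log$-adaptive rates of \citet{guntuboyina2018nonparametric}, but the crude dimension bound suffices. The second part is the excess dispersion of the weights. Collect it as the $\mathrm{Var}(\mathrm{IPS})$ term via Cauchy--Schwarz.

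\textbf{Step 4 (identifiability).} To separate $\alpha_r$, $f^+$ and $f^-$ in the joint fit, I need their design blocks to be non-collinear. The normalization $f^\pm(0)=0$ together with per-bin sample size supplies this; I would impose it as part of the effective-sample-size assumption. This identifiability step, together with making the IPS martingale/dependence across nodes within a replication rigorous, is the main obstacle. I would handle the within-replication dependence by conditioning on replications, treating each replication as one independent block and absorbing the dependence into $N_{\mathrm{eff}}$. Finally, take expectations and sum the $+$ and $-$ components to obtain the stated bound.
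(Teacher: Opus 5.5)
Your outline is sound, but it takes a different route from the paper's. The paper never works with the joint design. It treats a single curve and notes that, with a one-hot bin design, the weighted least-squares objective splits into a within-bin term plus $\sum_t\widehat W_t(\widehat\mu_t-\theta_t)^2$. Hence $\widehat\theta$ is the $\widehat W$-weighted projection of the vector of weighted bin means $\widehat\mu$ onto $\mathcal{C}$. Nonexpansiveness gives $\|\widehat\theta-\theta^\star\|_{\widehat W}^2\le\|\widehat\mu-\theta^\star\|_{\widehat W}^2$, each bin contributes $O(1)$ in expectation, and dividing by the minimal bin weight yields $B/N_{\mathrm{eff}}$. The pair $(f_r^+,f_r^-)$ is then handled only by the remark that the extension is ``additive.''

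You instead analyze what the program actually computes, namely $\alpha_r,f_r^+,f_r^-$ fitted jointly, via the basic inequality. The two arguments are the same idea at different generality. The image of $\mathbb{R}\times\mathcal{C}^+\times(-\mathcal{C}^-)$ under the design map lies in the column space $L$ of the joint design. Pythagoras and nonexpansiveness therefore give $\|X(\widehat\theta-\theta)\|_W\le\|\Pi_L\xi\|_W$. This is the paper's bin-mean comparison lifted to the joint design, and it is all your crude dimension bound needs. No tangent cones or Gaussian statistical dimension are required; conditional second moments suffice. Your one-block-per-replication device also handles within-replication correlation, which the paper simply assumes away by positing independent samples. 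The paper's route is shorter. Yours actually covers the joint estimator, leaves room for adaptive rates, and exposes the identifiability issue that ``additive'' hides.

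On that issue, the first half of your Step 4 sentence is too optimistic. The normalization $f^\pm(0)=0$ plus per-bin sample size yields identifiability only if ``bin'' means joint cells of $(z,k^+,k^-)$. Write $\Delta^\pm=\widehat f_r^\pm-f_r^\pm$ and $\Delta\alpha=\widehat\alpha_r-\alpha_r$. Suppose every cell $(0,t,0)$, $(0,0,t)$ and $(1,0,0)$ carries weighted count at least $N_{\mathrm{eff}}$. Restricting the prediction norm to these cells shows it dominates $N_{\mathrm{eff}}(\|\Delta^+\|_2^2+\|\Delta^-\|_2^2+(\Delta\alpha)^2)$, which is exactly the domination you assert in Step 2.

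With marginal bins only, this fails. If $N_i^+=N_i^-$, then $k^+_{i\ell}\equiv k^-_{i\ell}$. Whenever $(\widehat f^+,\widehat f^-)$ is a minimizer, so is $(\widehat f^++g,\widehat f^-+g)$ for every nondecreasing discretely concave $g$ with $g(0)=0$. The stated risk bound therefore fails for some solution. You genuinely need this joint-design hypothesis, or an explicit lower bound of order $N_{\mathrm{eff}}$ on the smallest eigenvalue of the weighted Gram matrix restricted to $\{f^\pm(0)=0\}$. The paper's per-curve proof tacitly relies on the same condition.

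A smaller point concerns the mean-zero property of $w\xi$ given the design. It should come from the correctly specified conditional mean $\mathbb{E}[Y\mid z,k^+,k^-]=\alpha_r z+f_r^+(k^+)-f_r^-(k^-)$ together with design-measurable weights. The paper assumes exactly this, in the form $\mathbb{E}[Y_m\mid T_m=t]=\theta^\star_t$. The seed-set IPS identity you cite only delivers marginal, not conditional, unbiasedness.
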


Theorem~\ref{thm:estimation} is an estimation theorem for the response curves, but its role is causal:
it ensures that the exposure-level objects needed to evaluate $\widetilde F(S)$ are learnable at a rate controlled by bin complexity and effective sample size.

\textbf{Stage II: estimating expected exposures.}
For fixed $S$, define the Monte Carlo estimator based on $R$ independent diffusion simulations:
\begin{equation}
\label{eq:khat-mc}
\widehat k_i^\pm(S)
:=
\frac{1}{R}\sum_{r=1}^R K_{i,r}^\pm(S),
\qquad
K_{i,r}^\pm(S):=K_i^\pm\!\bigl(z_{\infty}^{(r)}(S)\bigr),
\end{equation}
where $z_{\infty}^{(r)}(S)$ is the diffusion limit in simulation $r$ under seed set $S$.
Then $\mathbb{E}[\widehat k_i^\pm(S)]=k_i^\pm(S)$.

The next lemma connects propagation strength to the variability of exposure counts, which directly governs simulation error.

\begin{lemma}[Moment control for exposure increments under weak propagation]
\label{lem:moment-control}
Assume Assumption~\ref{assump:lpp}.
For any fixed budget $|S|\le K$ and any $i\in V$, there exist constants $D_i^\pm(G,K)$ and $C_i^\pm(G,K)$ such that
\[
\mathbb{E}\big[U_i^\pm(S)\big]\le D_i^\pm(G,K)\,\varepsilon,~
\mathbb{E}\big[(U_i^\pm(S))_2\big]\le C_i^\pm(G,K)\,\varepsilon^2.
\]
Consequently,
\(
\mathrm{Var}\big(K_i^\pm(S)\big)
=
O(\varepsilon)
\)
uniformly over $S\in\mathcal{S}_K$ (with constants depending only on local structure and $K$).
\end{lemma}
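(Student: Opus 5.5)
We treat $z_\infty(S)$ as the final state of an Independent Cascade–type process satisfying Assumption~\ref{assump:lpp}. Write $U:=U_i^\pm(S)=\sum_{j\in N_i^\pm\setminus S} z_{j,\infty}(S)$. The plan has three steps: a path-counting union bound for the first moment, the same argument over pairs of path systems for the factorial moment, and then the variance identity.

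\textbf{First moment.} A non-seed node $j$ can be active at the limit only if some directed path $s=v_0\to v_1\to\cdots\to v_m=j$ with $s\in S$ and $m\ge 1$ has every edge transmit. Each edge fires at most once, with probability at most $\varepsilon$, so the union bound gives
\[
\mathbb{P}(z_{j,\infty}=1)\;\le\;\sum_{m\ge 1}P_m(S,j)\,\varepsilon^m ,
\]
where $P_m(S,j)$ is the number of length-$m$ paths from $S$ to $j$. If $\Delta$ is the maximum in-degree, then $P_m(S,j)\le \Delta^m$. For $\varepsilon\Delta\le 1/2$ the series is at most $2\Delta\varepsilon$, and in general it is $\varepsilon\, c_j(G)$ with $c_j(G)$ finite for small $\varepsilon$. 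Summing over $j\in N_i^\pm\setminus S$ gives $D_i^\pm(G,K)=\sum_{j}c_j$. The dependence on $K$ enters through the fact that at most $K$ sources contribute. This is the constant $D_i^\pm$.

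\textbf{Second factorial moment.} We have $(U)_2=\sum_{j\neq j'}z_jz_{j'}$, so we must bound $\mathbb{P}(z_{j,\infty}=z_{j',\infty}=1)$ for distinct non-seed $j,j'$. The event requires an activating path to $j$ and one to $j'$. Their union contains at least two distinct edges, because $j\ne j'$, both are non-seeds, and each therefore has its own incoming active edge. Taking a union bound over pairs of paths, the union is a subgraph with at least two edges, each transmitting independently. We therefore bound the probability by $\sum_{\text{pairs}}\varepsilon^{|E(\text{union})|}\le \varepsilon^2\,c_{jj'}(G)$, where $c_{jj'}$ counts weighted path pairs. Summing over $j\neq j'$ in $N_i^\pm$ gives $C_i^\pm(G,K)$.

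The main obstacle is making the path-pair count converge. A shared prefix can contribute many overlapping edges, and edge reuse must not be double-counted. The first attempt will be to count the union as a branching tree rooted at $S$ with two leaves $j,j'$: enumerate the branching node together with the three segment lengths, bound each segment count by $\Delta^{\text{length}}$, and obtain a geometric sum that is finite whenever $\varepsilon\Delta<1$. The "local structure" in the constants is exactly this truncation: contributions from distance $m$ decay like $(\Delta\varepsilon)^m$, so only a neighborhood of $N_i^\pm$ matters to leading order.

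\textbf{Variance.} Write $K_i^\pm(S)=|N_i^\pm\cap S|+U$, where the first term is deterministic. Then
\[
\mathrm{Var}(K_i^\pm(S))=\mathrm{Var}(U)=\mathbb{E}[(U)_2]+\mathbb{E}[U]-(\mathbb{E}U)^2\le C\varepsilon^2+D\varepsilon=O(\varepsilon).
\]
Uniformity over $S\in\mathcal{S}_K$ holds because the constants depend only on $G$, $K$, and the neighborhood of $i$, not on which seeds are chosen.
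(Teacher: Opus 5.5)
Your proposal is correct and follows the paper's route. You union-bound over simple activating paths for $\mathbb{E}[U_i^\pm(S)]$, union-bound over pairs of paths whose union has at least two distinct live edges (the final edges into $j\neq j'$ differ) for $\mathbb{P}(z_{j,\infty}=z_{j',\infty}=1)$, and use $\mathrm{Var}(U)\le\mathbb{E}[U]+\mathbb{E}[(U)_2]$. Like the paper, which adds an explicit live-edge assumption for this reason, you rightly rely on one-shot independent edge transmissions, which Assumption~\ref{assump:lpp} alone does not supply.

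The ``obstacle'' you flag is not needed for the lemma as stated. On a finite graph the paper simply bounds each term by $\varepsilon$ (resp.\ $\varepsilon^2$) and multiplies by the finite number of simple paths (resp.\ path pairs). Your in-degree/geometric-series refinement only buys constants that are genuinely local (independent of $n$ when $\varepsilon\Delta<1$). If you pursue it, take both witnesses from a single spanning forest of the live graph, so that their union really is a tree with one branch point or two disjoint paths.
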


Lemma~\ref{lem:moment-control} explains a practically important phenomenon:
weak propagation does not only reduce approximation bias ($O(\varepsilon^2)$),
it also makes expected exposures easier to estimate because exposure variability shrinks with $\varepsilon$. The next proposition gives a standard concentration bound for the Monte Carlo estimator \eqref{eq:khat-mc},
with an explicit variance term that becomes smaller under weak propagation.

\begin{proposition}[Finite-sample concentration for exposure simulation]
\label{prop:mc-concentration}
Fix $S\in\mathcal{S}_K$, and suppose $0\le K_{i,r}^\pm(S)\le M_i^\pm$ almost surely
(e.g., $M_i^\pm:=|N_i^\pm|$).
Then for any $\delta\in(0,1)$, with probability at least $1-\delta$, $\bigl|\widehat k_i^\pm(S)-k_i^\pm(S)\bigr|\le$
\begin{equation}
\label{eq:mc-bernstein}
\sqrt{\frac{2\,\mathrm{Var}(K_i^\pm(S))\log(2/\delta)}{R}}
+
\frac{2M_i^\pm\log(2/\delta)}{3R}.
\end{equation}
Under Lemma~\ref{lem:moment-control}, the leading term scales as $O\!\big(\sqrt{\varepsilon\log(1/\delta)/R}\big)$.
\end{proposition}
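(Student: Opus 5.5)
The plan is to apply Bernstein's inequality for i.i.d. bounded random variables directly to the summands in \eqref{eq:khat-mc}. The simulations $r=1,\dots,R$ are independent, and all use the same seed set $S$. So the variables $X_r:=K_{i,r}^\pm(S)$ are i.i.d. copies of $K_i^\pm(S)$, with mean $k_i^\pm(S)$ (this is the unbiasedness already noted after \eqref{eq:khat-mc}) and variance $\sigma^2:=\mathrm{Var}(K_i^\pm(S))$.

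\textbf{Step 1: centered bounds.} Since $0\le X_r\le M_i^\pm$ almost surely, the centered variables $X_r-k_i^\pm(S)$ lie in an interval of length $M_i^\pm$. Hence they are bounded in absolute value by $M_i^\pm$; I will use the crude bound $b=M_i^\pm$, which is enough to reproduce the constant $2/3$ in \eqref{eq:mc-bernstein}.

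\textbf{Step 2: two-sided Bernstein.} Bernstein's inequality gives, for each sign and every $u>0$,
\[
\mathbb{P}\Bigl(\pm(\widehat k_i^\pm(S)-k_i^\pm(S))\ge u\Bigr)\le \exp\!\Bigl(-\frac{Ru^2}{2\sigma^2+2bu/3}\Bigr).
\]
A union bound over the two tails costs a factor $2$, which accounts for the $\log(2/\delta)$.

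\textbf{Step 3: invert the tail.} Rather than solving the quadratic exactly, I will use the standard inversion: with probability at least $1-\delta$,
\[
|\widehat k-k|\le \sqrt{2\sigma^2 L/R}+2bL/(3R),\qquad L=\log(2/\delta).
\]
To verify it, I will check that $u_0:=\sqrt{2\sigma^2L/R}+2bL/(3R)$ satisfies $Ru_0^2\ge L(2\sigma^2+2bu_0/3)$. Expanding $u_0^2$ gives $u_0^2\ge 2\sigma^2L/R+u_0\cdot 2bL/(3R)$, because $u_0^2=u_0(\sqrt{\cdot}+2bL/(3R))$ and $u_0\sqrt{2\sigma^2L/R}\ge 2\sigma^2L/R$. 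This bookkeeping is the only place needing care, and I expect it to be the main (though minor) obstacle: one has to keep the constants $2$ and $2/3$ exactly as stated. Substituting $b=M_i^\pm$ then gives \eqref{eq:mc-bernstein}.

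\textbf{Step 4: the scaling claim.} Lemma~\ref{lem:moment-control} gives $\sigma^2=O(\varepsilon)$ uniformly over $S\in\mathcal{S}_K$. Plugging this into the first term yields $O(\sqrt{\varepsilon\log(1/\delta)/R})$, using $\log(2/\delta)\le 2\log(1/\delta)$ for $\delta\le 1/2$, and absorbing constants otherwise. The second term is $O(\log(1/\delta)/R)$, which is of lower order in $R$; this justifies calling the first term the leading one.
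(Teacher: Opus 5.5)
Your proof is correct and follows the paper's own argument: Bernstein's inequality applied to the i.i.d. bounded draws $K_{i,r}^\pm(S)$, then inversion of the two-sided tail. Your direct check that $u_0=\sqrt{2\sigma^2L/R}+2bL/(3R)$ satisfies $Ru_0^2\ge L(2\sigma^2+2bu_0/3)$ is a cleaner replacement for the paper's ``set the bound equal to $\delta$ and solve for $t$.'' One small correction in Step 4: ``absorbing constants otherwise'' does not work uniformly, because $\log(2/\delta)/\log(1/\delta)$ is unbounded as $\delta\to 1$, so the $O(\sqrt{\varepsilon\log(1/\delta)/R})$ form holds only for $\delta$ bounded away from $1$ (this looseness is already in the statement).
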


Define the plug-in estimator for the exposure surrogate:
\begin{equation}
\label{eq:Fhat-fixedS}
\widehat F(S)
:=
\sum_{i\in V}\Bigl(
\widehat\alpha_i\,\mathbb{I}(i\in S)
+
\widehat f_i^+\!\bigl(\widehat k_i^+(S)\bigr)
-
\widehat f_i^-\!\bigl(\widehat k_i^-(S)\bigr)
\Bigr),
\end{equation}
where $\widehat f_i^\pm$ and $\widehat\alpha_i$ are obtained from Stage I, and $\widehat k_i^\pm(S)$ from Stage II.

The next theorem formalizes the finite-sample error for estimating the \emph{true} steady-state welfare $F(S)$ for a fixed $S$,
explicitly separating (i) structural approximation, (ii) response estimation, and (iii) exposure simulation errors.

\begin{theorem}[Finite-sample error bound for steady-state welfare estimation (fixed $S$)]
\label{thm:fixedS-finite}
Assume the conditions of Corollary~\ref{cor:welfare-approx} and Theorem~\ref{thm:estimation}.
Fix $S\in\mathcal{S}_K$ and suppose each $f_i^\pm$ is $L_i^\pm$-Lipschitz on $[0,B^\pm]$ under the interpolation convention.
Then
\begin{equation}
\label{eq:fixedS-decomp}
\bigl|\widehat F(S)-F(S)\bigr|
\;\le\;
\underbrace{\bigl|F(S)-\widetilde F(S)\bigr|}_{\textup{structural }(O(\varepsilon^2))}
\;+\;
\underbrace{\bigl|\widehat F(S)-\widetilde F(S)\bigr|}_{\textup{estimation + simulation}},
\end{equation}
and the latter term admits the bound $\mathbb{E}\bigl|\widehat F(S)-\widetilde F(S)\bigr|\le$
\begin{align}
\label{eq:fixedS-bound}
&
\sum_{i\in V}\Big(
\mathbb{E}|\widehat\alpha_i-\alpha_i| \mathbb{I}(i\in S)
+
\mathbb{E}\big|\widehat f_i^+(k_i^+(S)) - f_i^+(k_i^+(S))\big|\\
+&
\mathbb{E}\big|\widehat f_i^-(k_i^-(S)) - f_i^-(k_i^-(S))\big|
\Big)
\nonumber\\
+&
\sum_{i\in V}\Big(
L_i^+\cdot \mathbb{E}\big|\widehat k_i^+(S)-k_i^+(S)\big|
+
L_i^-\cdot \mathbb{E}\big|\widehat k_i^-(S)-k_i^-(S)\big|
\Big).
\end{align}
In particular, combining Theorem~\ref{thm:estimation} with standard norm inequalities on the discrete grid
and Proposition~\ref{prop:mc-concentration}, the total error satisfies $\mathbb{E}\bigl|\widehat F(S)-F(S)\bigr|]=$
\begin{equation}
\label{eq:fixedS-rate}
O(\varepsilon^2)
+
\widetilde O\!\left(\sqrt{\frac{B^+ + B^-}{N_{\mathrm{eff}}}}\right)
+
O\!\left(\sqrt{\frac{\varepsilon}{R}}\right)
+
\mathrm{Var}(\mathrm{IPS}),
\end{equation}
where the $\sqrt{\varepsilon/R}$ term reflects the propagation-controlled exposure variance under Lemma~\ref{lem:moment-control}.
\end{theorem}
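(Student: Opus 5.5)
The argument is a triangle inequality, a term-by-term decomposition of the plug-in error, and then substitution of Theorem~\ref{thm:estimation}, Proposition~\ref{prop:mc-concentration} and Lemma~\ref{lem:moment-control}.

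\textbf{Decomposition.} The display \eqref{eq:fixedS-decomp} follows from inserting $\widetilde F(S)$ and applying the triangle inequality. Corollary~\ref{cor:welfare-approx} then bounds the structural term by $B_{\mathrm{str}}\varepsilon^2=O(\varepsilon^2)$.

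\textbf{Per-node bound.} Subtract \eqref{eq:surrogate-objective} from \eqref{eq:Fhat-fixedS}. For each $i$ and each sign, insert the intermediate quantity $\widehat f_i^\pm(k_i^\pm(S))$:
\[
\widehat f_i^\pm(\widehat k_i^\pm)-f_i^\pm(k_i^\pm)
=\bigl[\widehat f_i^\pm(\widehat k_i^\pm)-\widehat f_i^\pm(k_i^\pm)\bigr]
+\bigl[\widehat f_i^\pm(k_i^\pm)-f_i^\pm(k_i^\pm)\bigr].
\]
The second bracket is the response-estimation term in \eqref{eq:fixedS-bound}. The $\alpha$ difference contributes $|\widehat\alpha_i-\alpha_i|\mathbb{I}(i\in S)$.

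For the first bracket, \eqref{eq:fixedS-bound} uses the constant $L_i^\pm$, which is a Lipschitz constant of the \emph{true} $f_i^\pm$, not of $\widehat f_i^\pm$. I handle this with shape constraints:
\begin{itemize}
\item $\widehat f_i^\pm$ is nondecreasing and concave with $\widehat f_i^\pm(0)=0$, and its piecewise-linear interpolation keeps these properties.
\item Its Lipschitz constant on $[0,B^\pm]$ is therefore its first increment $\widehat f_i^\pm(1)-\widehat f_i^\pm(0)$.
\item I bound that increment by $L_i^\pm$ plus an estimation error in the first bin.
\end{itemize}
The excess is a product of two small errors, which I absorb into the rate. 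If that is unacceptable, the cleaner fix is to add the Lipschitz constraint $\widehat f(1)-\widehat f(0)\le L$ to \eqref{eq:shape-ls}. This does not affect the risk bound when the truth satisfies the constraint, and it makes $\widehat f_i^\pm$ exactly $L_i^\pm$-Lipschitz.

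Independence of Stage II simulations from Stage I data gives the bound by conditioning. Taking expectations and summing over $i$ then yields \eqref{eq:fixedS-bound}.

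\textbf{Rates.}
\begin{itemize}
\item \textbf{Response terms.} Interpolated evaluation at a point of $[0,B^\pm]$ is a convex combination of at most two grid values, so the error there is at most the sup-norm error on the grid, which is at most $\|\widehat f-f\|_2$. Jensen then gives $\mathbb{E}|\cdot|\le(\mathbb{E}\|\cdot\|_2^2)^{1/2}$. Theorem~\ref{thm:estimation} yields $\widetilde O(\sqrt{(B^++B^-)/N_{\mathrm{eff}}})$ plus the IPS variance contribution. I keep it notationally as $\mathrm{Var}(\mathrm{IPS})$, following the paper's convention; strictly it enters under a square root, and I would remark on this.
\item \textbf{$\alpha$ terms.} These are handled the same way, as coordinates of the same least-squares fit.
\item \textbf{Simulation terms.} I bound $\mathbb{E}|\widehat k-k|\le\sqrt{\mathrm{Var}(K_i^\pm(S))/R}$ directly by Cauchy--Schwarz, since the draws are i.i.d. and unbiased. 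Lemma~\ref{lem:moment-control} gives $\mathrm{Var}(K_i^\pm(S))=O(\varepsilon)$, hence $O(\sqrt{\varepsilon/R})$. Proposition~\ref{prop:mc-concentration} gives a high-probability version if preferred.
\end{itemize}
Summing over the finitely many nodes, with constants depending on $n$, $G$ and $K$, gives \eqref{eq:fixedS-rate}.

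The main obstacle is the Lipschitz transfer from $f$ to $\widehat f$, together with being honest about how the IPS variance term enters under a square root. I would resolve the first with the concavity argument above, falling back on the constrained estimator.
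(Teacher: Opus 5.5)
Your proof has the same skeleton as the paper's: a triangle inequality through $\widetilde F(S)$, insertion of $\widehat f_i^\pm(k_i^\pm(S))$, a Lipschitz step, and then substitution of Theorem~\ref{thm:estimation}, Lemma~\ref{lem:moment-control} and the Monte Carlo bound. Where you differ, you are more careful than the paper. The paper justifies the Lipschitz step only by the parenthetical ``(and hence on $\widehat f_i^\pm$ on the interpolation grid)'', and that step is not valid. If Stage~I overshoots in the first bin, so that $\widehat f_i^\pm(1)>L_i^\pm$, then $\widehat f_i^\pm$ has slope $\widehat f_i^\pm(1)>L_i^\pm$ on $[0,1]$. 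That is exactly the segment containing $k_i^\pm(S)=O(\varepsilon)$ for every $i$ with $N_i^\pm\cap S=\varnothing$. Your direct bound $\mathbb{E}|\widehat k_i^\pm(S)-k_i^\pm(S)|\le\sqrt{\mathrm{Var}(K_i^\pm(S))/R}$ is also sharper than the paper's route. Going through Bernstein and then converting to an expectation leaves an extra $O(M_i^\pm/R)$ term, which is not $O(\sqrt{\varepsilon/R})$ unless $R\varepsilon\gtrsim (M_i^\pm)^2$. You are also right that the IPS term should enter under a square root.

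Your two patches recover the rate, but neither gives the stated bound on $\mathbb{E}|\widehat F(S)-\widetilde F(S)|$ verbatim. The first inflates $L_i^\pm$ to $L_i^\pm+|\widehat f_i^\pm(1)-f_i^\pm(1)|$, and the second changes the estimator. You need neither, because the bound holds in expectation for the unmodified shape-constrained estimator.

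Condition on Stage~I, fix $i$ and a sign, and set $g=\widehat f_i^\pm$, $f=f_i^\pm$, $k=k_i^\pm(S)$, $X=\widehat k_i^\pm(S)$. Let $c=g(k)-f(k)$ and $h(x)=g(x)-f(k)$, with $y^+=\max(y,0)$ and $y^-=\max(-y,0)$. If $k=0$ then $X=0$ almost surely and there is nothing to prove, so take $k>0$. Write $|y|=y+2y^-$.
\begin{itemize}
\item By Jensen (concave $g$, $\mathbb{E}X=k$, independence of the two stages), $\mathbb{E}h(X)\le c$.
\item For $x\ge k$, monotonicity of $g$ gives $h(x)\ge c$, so $h(x)^-\le c^-$.
\item For $0\le x<k$, concavity of $g$ with $g(0)=0$ gives $g(x)\ge (x/k)g(k)$. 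Also $f(k)\le Lk$ by $f(0)=0$ and the Lipschitz assumption. Hence $f(k)-g(x)\le (1-x/k)f(k)-(x/k)c\le L(k-x)+c^-$.
\end{itemize}
So $h(x)^-\le c^-+L(k-x)^+$ for all $x\ge 0$. Since $\mathbb{E}(X-k)=0$,
\[
\mathbb{E}|h(X)|\le c+2c^-+2L\,\mathbb{E}(k-X)^+=|c|+L\,\mathbb{E}|X-k|.
\]
Averaging over Stage~I gives the per-term bound with the true $L_i^\pm$ and no Lipschitz property of $\widehat f_i^\pm$. The argument uses only the constraints already in the program (monotonicity, concavity, value $0$ at $0$), $f(0)=0$, unbiasedness of $\widehat k$, and independence of the two stages.

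One soft spot remains, and the paper shares it. Theorem~\ref{thm:estimation} says nothing about $\widehat\alpha_i$. Your remark that the $\alpha$ terms are ``handled the same way'' is therefore an additional claim, not a consequence of the cited result. It would need a joint-projection argument together with identifiability of $\alpha$ from the fitted cell means.
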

Theorem~\ref{thm:fixedS-finite} makes the causal-estimation message precise:
for fixed $S$, the error in estimating welfare decomposes into a second-order diffusion bias and two reducible statistical errors.
Notably, weak propagation helps twice: it shrinks the structural bias ($\varepsilon^2$) and reduces simulation variance ($\sqrt{\varepsilon/R}$). It is not a claim that the welfare can be recovered from a \emph{single} network realization without replication;
replicated experiments (or logged data with overlap) are still required to learn response curves.
It also does not remove algorithmic hardness of optimizing over $S$; it only makes welfare evaluation estimable.

{$\sqrt{\varepsilon/R}$ dependence is not an artifact of simulation.} It reflects an intrinsic fact about the diffusion-induced exposure distribution. Even with perfect computation, exposure variability exists, and is smaller when propagation is weaker, and any unbiased estimator must inherit that variance scaling. We further present the asymptotic analysis as follows:


\begin{theorem}[Asymptotic identification and consistency (fixed $S$)]
\label{thm:fixedS-asymp}
Assume the conditions of Theorem~\ref{thm:fixedS-finite}.
Fix $S\in\mathcal{S}_K$.
As $N_{\mathrm{eff}}\to\infty$ and $R\to\infty$, we have
\(
\widehat F(S)\xrightarrow[]{p}\widetilde F(S).
\)
Moreover, $F(S)$ is partially identified by Theorem~\ref{thm:partial-id}:
\(
F(S)\in[\widetilde F(S)\pm O(\varepsilon^2)].
\)
If in addition $\varepsilon=\varepsilon_N\to 0$ (a weak-propagation asymptotic),
then
\(
\widehat F(S)\xrightarrow[]{p} F(S),
\)
i.e., the estimator is consistent for the true steady-state welfare.
\end{theorem}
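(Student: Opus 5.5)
The plan is to derive the three claims of Theorem~\ref{thm:fixedS-asymp} from the finite-sample bound of Theorem~\ref{thm:fixedS-finite} and the interval of Theorem~\ref{thm:partial-id}. Convergence in probability will come from convergence in $L^1$ via Markov's inequality. The first claim concerns the statistical part $\widehat F(S)-\widetilde F(S)$, which is controlled by \eqref{eq:fixedS-bound}. Since $V$ is finite and $S$ is fixed, it suffices to show that each summand in \eqref{eq:fixedS-bound} tends to $0$.

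\textbf{Response-curve terms.} Theorem~\ref{thm:estimation} gives $\mathbb{E}\|\widehat f_r^\pm-f_r^\pm\|_2^2=\widetilde O((B^++B^-)/N_{\mathrm{eff}})+\mathrm{Var}(\mathrm{IPS})$. The value $\widehat f_i^\pm(k_i^\pm(S))$ at a non-integer point is a convex combination of values at two adjacent grid points, by the interpolation convention. Hence
\[
\big|\widehat f_i^\pm(k)-f_i^\pm(k)\big|\le \max_t|\widehat f^\pm(t)-f^\pm(t)|\le \|\widehat f^\pm-f^\pm\|_2,
\]
and by Jensen the expectation is at most $\sqrt{\mathbb{E}\|\cdot\|_2^2}$. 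The $\widehat\alpha_i$ term is handled the same way; I treat $\alpha_r$ as part of the same least-squares fit in \eqref{eq:shape-ls}. This step needs two conventions. First, $B^\pm$ stay fixed, which is natural since $K_i^\pm\le|N_i^\pm|$. Second, the IPS variance term vanishes as $N_{\mathrm{eff}}\to\infty$; I read it as a weighted-count variance of order $1/N_{\mathrm{eff}}$.

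\textbf{Exposure terms.} Because $\widehat k_i^\pm(S)$ is an average of $R$ i.i.d.\ draws bounded by $M_i^\pm$,
\[
\mathbb{E}|\widehat k_i^\pm(S)-k_i^\pm(S)|\le\sqrt{\mathrm{Var}(K_i^\pm(S))/R}\to0 .
\]
Alternatively, one can integrate the tail bound of Proposition~\ref{prop:mc-concentration}. Multiplying by the Lipschitz constants $L_i^\pm$ shows that the simulation terms vanish as $R\to\infty$. Together these give $\mathbb{E}|\widehat F(S)-\widetilde F(S)|\to0$, and therefore $\widehat F(S)\xrightarrow{p}\widetilde F(S)$.

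\textbf{Partial identification.} This is immediate from Theorem~\ref{thm:partial-id}: $F(S)\in\mathcal I(S)=[\widetilde F(S)\pm B_{\mathrm{str}}\varepsilon^2]$.

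\textbf{Consistency under $\varepsilon_N\to0$.} I apply the triangle inequality \eqref{eq:fixedS-decomp}:
\[
|\widehat F(S)-F(S)|\le B_{\mathrm{str}}\varepsilon_N^2+|\widehat F(S)-\widetilde F(S)|.
\]
The first term is deterministic and tends to $0$, provided $B_{\mathrm{str}}$ stays bounded along the sequence. The second tends to $0$ in probability by the first claim.

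\textbf{Main obstacle.} The delicate point is uniformity in the triangular-array regime where $\varepsilon_N$ varies. The underlying diffusion law, and with it $k_i^\pm(S)$, $\widetilde F(S)$ and $F(S)$, all change with $N$. The constants must therefore not depend on $\varepsilon$. For this I will use three facts:
\begin{itemize}
\item $C_i^\pm(G,K)$ and $\kappa_i^\pm$ depend only on the graph, $K$, and the fixed response curves.
\item The Lipschitz constants $L_i^\pm$ are fixed.
\item The bounds from Theorem~\ref{thm:estimation} and Lemma~\ref{lem:moment-control} hold uniformly; in fact $\mathrm{Var}(K_i^\pm(S))=O(\varepsilon)$ only improves as $\varepsilon\to0$.
\end{itemize}
With these, all error bounds hold uniformly along the sequence, and the convergence statements follow.
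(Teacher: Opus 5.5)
Your argument is correct, but it rests on a different tool than the paper's. The paper never invokes the quantitative bound \eqref{eq:fixedS-bound}. It asserts componentwise convergence in probability: $(\widehat\alpha_i,\widehat f_i^\pm)\to(\alpha_i,f_i^\pm)$ on the finite grid via Theorem~\ref{thm:estimation}, and $\widehat k_i^\pm(S)\to k_i^\pm(S)$ via Proposition~\ref{prop:mc-concentration}. It then applies continuity of the evaluation map $(\theta,x)\mapsto\bar\theta(x)$, sums over $V$, and closes the last claim with Slutsky.

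You instead prove $L^1$ convergence from the decomposition of Theorem~\ref{thm:fixedS-finite} and then use Markov. For the curve terms you use the sup-norm plus Jensen; for the simulation terms you use $\mathbb{E}|\widehat k_i^\pm-k_i^\pm|\le\sqrt{\mathrm{Var}(K_i^\pm(S))/R}$.

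Your route buys an explicit rate. More usefully, its constants do not depend on $\varepsilon$. So the regime $\varepsilon=\varepsilon_N$, where $k_i^\pm(S)$, $\widetilde F(S)$ and $F(S)$ all move with $N$, is handled cleanly. The paper's continuous-mapping/Slutsky step tacitly treats these targets as fixed. That is repairable, e.g.\ by uniform continuity of the evaluation map on compacts, but the paper does not address it.

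The paper's route, conversely, needs no Lipschitz control of the \emph{estimated} curves. Your simulation terms inherit a step in the proof of Theorem~\ref{thm:fixedS-finite}: it bounds $|\widehat f_i^\pm(\widehat k_i^\pm)-\widehat f_i^\pm(k_i^\pm)|$ by $L_i^\pm|\widehat k_i^\pm-k_i^\pm|$, i.e.\ it transfers the true curve's Lipschitz constant to $\widehat f_i^\pm$ without justification. You can repair this within your framework. Since $\widehat f_i^\pm$ obeys the shape constraints, its interpolation is Lipschitz with constant $\Delta\widehat f_i^\pm(0)\le L_i^\pm+2\max_t|\widehat f_i^\pm(t)-f_i^\pm(t)|$. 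The Monte Carlo draws are independent of the Stage~I data, so the expectation factorizes. It becomes a factor bounded via Theorem~\ref{thm:estimation} times $\mathbb{E}|\widehat k_i^\pm-k_i^\pm|$.

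The conventions you flagged (fixed $B^\pm$, vanishing $\mathrm{Var}(\mathrm{IPS})$, consistency of $\widehat\alpha_i$) are used silently by the paper too.
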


Theorem~\ref{thm:fixedS-asymp} clarifies the conceptual roles of data and dynamics.
Increasing $N_{\mathrm{eff}}$ and $R$ eliminates estimation and simulation noise, yielding a consistent estimator for the exposure surrogate.
Whether this converges to the \emph{true} welfare depends on the diffusion regime: if propagation remains strong, the $O(\varepsilon^2)$ bias does not vanish and partial identification is the appropriate limit object.

\subsection{Optimization: from estimators to near-optimal seeding}
\label{subsec:optimization-theory}

Once $F(S)$ is identified (pointwise or partially) and estimable for fixed $S$,
we can meaningfully pose the decision problem of selecting a near-optimal seed set under a budget.
Optimizing a misspecified proxy (e.g., reach) can be algorithmically impressive yet causally irrelevant; our guarantees avoid this pitfall. We optimize the empirical surrogate $\widehat F(S)$ defined as above.
Algorithm~\ref{alg:cim} summarizes the two-stage pipeline (estimate responses, then run greedy-style selection using simulated exposures).
\begin{algorithm}[t]
\caption{Two-stage Causal Influence Maximization (CIM)}
\label{alg:cim}
\textbf{Input:}
graph $G=(V,E)$; exposure sets $\{N_i^+,N_i^-\}_{i\in V}$;
data $\mathcal{D}=\{(z_{i\ell},k^+_{i\ell},k^-_{i\ell},Y_{i\ell},X_\ell)\}$;
budget $K$; diffusion simulator or live-edge sampler $\mathsf{Sim}$; number of samples $R$.
\\[2pt]
\textbf{Output:} seed set $\widehat S$ with $|\widehat S|\le K$.
\begin{algorithmic}[1]
\STATE \textbf{(Stage I: Estimate responses)} Fit $\widehat\alpha_r,\widehat f_r^\pm$ by \eqref{eq:shape-ls}--\eqref{eq:shape-constraints} (possibly with IPS/DR weights).
\STATE \textbf{Initialize} $\widehat S\leftarrow\varnothing$.
\FOR{$t=1$ \textbf{to} $K$}
    \FOR{\textbf{each} candidate $v\in V\setminus \widehat S$}
        \STATE Estimate expected exposures $\widehat k_i^\pm(\widehat S\cup\{v\})$ for all $i$ by $R$ samples from $\mathsf{Sim}$.
        \STATE Compute the marginal gain $\Delta(v\mid \widehat S):=\widehat F(\widehat S\cup\{v\})-\widehat F(\widehat S)$.
    \ENDFOR
    \STATE Select $v_t\in\arg\max_{v\in V\setminus \widehat S}\Delta(v\mid \widehat S)$ (lazy evaluation can be used when submodularity holds).
    \STATE Update $\widehat S\leftarrow \widehat S\cup\{v_t\}$.
\ENDFOR
\STATE \textbf{return} $\widehat S$.
\end{algorithmic}
\end{algorithm}
The next theorem formalizes the central optimization guarantee: an approximation algorithm for $\widehat F$ yields an approximation guarantee for the \emph{true} welfare $F$,
up to estimation and structural errors.

\begin{theorem}[End-to-end causal welfare guarantee]
\label{thm:end2end}
Let $\widehat S$ be the output of an algorithm that satisfies the approximation property
\begin{equation}
\label{eq:alg-rho}
\widehat F(\widehat S)\;\ge\;\rho\cdot \max_{S\in\mathcal{S}_K}\widehat F(S)
\qquad\text{for some }\rho\in(0,1]
\end{equation}
(e.g., $\rho=1-1/e$ in the monotone submodular regime using greedy).
Define uniform deviation terms
\[
\Delta_{\mathrm{est}}:=\sup_{S\in\mathcal{S}_K}|\widehat F(S)-\widetilde F(S)|,
\Delta_{\mathrm{str}}:=\sup_{S\in\mathcal{S}_K}|F(S)-\widetilde F(S)|.
\]
Then the achieved steady-state causal welfare satisfies
\begin{equation}
\label{eq:end2end}
F(\widehat S)\;\ge\;\rho\cdot \max_{S\in\mathcal{S}_K} F(S)\;-\;(1+\rho)\bigl(\Delta_{\mathrm{est}}+\Delta_{\mathrm{str}}\bigr).
\end{equation}
Moreover, under the assumptions of Corollary~\ref{cor:welfare-approx}, $\Delta_{\mathrm{str}}=O(\varepsilon^2)$,
and under the conditions of Theorem~\ref{thm:estimation}, $\Delta_{\mathrm{est}}$ decreases with $N_{\mathrm{eff}}$ up to the IPS variance term.
\end{theorem}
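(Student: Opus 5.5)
The argument is a direct chaining of inequalities through the two surrogates. Write $S^\star\in\arg\max_{S\in\mathcal{S}_K}F(S)$ and $\widehat S^\star\in\arg\max_{S\in\mathcal{S}_K}\widehat F(S)$. By the definitions of $\Delta_{\mathrm{est}}$ and $\Delta_{\mathrm{str}}$ and the triangle inequality, every $S\in\mathcal{S}_K$ satisfies the uniform sandwich
\[
|F(S)-\widehat F(S)|\le \Delta_{\mathrm{est}}+\Delta_{\mathrm{str}}=:\Delta .
\]
All remaining steps use only this sandwich and the approximation property \eqref{eq:alg-rho}.

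The chain runs as follows. First, $F(\widehat S)\ge \widehat F(\widehat S)-\Delta$. Next, $\widehat F(\widehat S)\ge \rho\,\widehat F(\widehat S^\star)$ by \eqref{eq:alg-rho}. Optimality of $\widehat S^\star$ for $\widehat F$ then gives $\widehat F(\widehat S^\star)\ge \widehat F(S^\star)$. Finally, $\widehat F(S^\star)\ge F(S^\star)-\Delta$. Combining these,
\[
F(\widehat S)\ge \rho\bigl(F(S^\star)-\Delta\bigr)-\Delta=\rho\max_{S\in\mathcal{S}_K}F(S)-(1+\rho)\Delta,
\]
which is exactly \eqref{eq:end2end}.

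The one subtle point is the step where $\rho$ multiplies a lower bound. Passing from $\rho\,\widehat F(\widehat S^\star)$ to $\rho\,(F(S^\star)-\Delta)$ needs $\rho\ge 0$, which holds since $\rho\in(0,1]$. The $\rho$-property is a multiplicative statement, so it is only meaningful when $\max_S\widehat F(S)\ge 0$. I would note this implicitly (e.g. by normalizing $\widehat F(\varnothing)$ as in the submodular literature). However, the chain above does not actually use nonnegativity: each inequality holds for arbitrary signs once \eqref{eq:alg-rho} is granted as a hypothesis. I would check this carefully, since it is the only place a sign issue could break the argument.

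For the ``moreover'' part, Corollary~\ref{cor:welfare-approx} bounds $|F(S)-\widetilde F(S)|$ uniformly over $\mathcal{S}_K$ by $B_{\mathrm{str}}\varepsilon^2$, so $\Delta_{\mathrm{str}}=O(\varepsilon^2)$ immediately. The claim about $\Delta_{\mathrm{est}}$ is qualitative. I would bound $|\widehat F(S)-\widetilde F(S)|$ as in \eqref{eq:fixedS-bound}, but with sup-norm errors of $\widehat f_i^\pm$ on the grid in place of pointwise expectations, since these are uniform in $S$. The sup-norm errors are controlled by the Euclidean norm, hence by Theorem~\ref{thm:estimation}, which yields decay in $N_{\mathrm{eff}}$ up to $\mathrm{Var}(\mathrm{IPS})$.

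The main obstacle is uniformity of the simulation term $\sup_S|\widehat k_i^\pm(S)-k_i^\pm(S)|$. I would handle it by a union bound of Proposition~\ref{prop:mc-concentration} over $|\mathcal{S}_K|\le n^K$ sets, which costs only a $\log(n^K)=K\log n$ factor. Alternatively, I would use common random numbers (live-edge samples shared across $S$). Since the statement only asserts qualitative decrease, a union-bound remark suffices.
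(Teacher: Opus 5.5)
Your proof is correct and is essentially the paper's argument. The paper chains $F\to\widetilde F\to\widehat F$ at $\widehat S$ and back through $\widetilde F$ at $S^\star$ one deviation at a time, whereas you first collapse to the single sandwich $|F(S)-\widehat F(S)|\le\Delta_{\mathrm{est}}+\Delta_{\mathrm{str}}$; the steps used ($\rho\ge 0$, $\max_{S}\widehat F(S)\ge\widehat F(S^\star)$) and the final bound are the same. Your remark that uniform control of $\sup_{S}|\widehat k_i^\pm(S)-k_i^\pm(S)|$ needs a union bound over $\mathcal{S}_K$ (or shared live-edge samples) is a refinement the paper skips, since it only cites Corollary~\ref{cor:welfare-approx} and Theorem~\ref{thm:estimation} for the ``moreover'' claims.
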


Theorem~\ref{thm:end2end} is the formal statement that our optimization guarantee is \emph{causal}:
the approximation factor $\rho$ applies to the true steady-state potential-outcome-defined welfare.
The price for operating with estimators is entirely transparent: a second-order structural term and an estimable statistical term.


The next corollary plugs the explicit finite-sample rates from Theorem~\ref{thm:fixedS-finite} into Theorem~\ref{thm:end2end},
yielding a concrete near-optimality bound with propagation dependence.

\begin{corollary}[Causal near-optimality with propagation-dependent rates]
\label{cor:opt-rate}
Assume the conditions of Theorem~\ref{thm:fixedS-finite} hold uniformly over $S\in\mathcal{S}_K$,
and suppose an algorithm returns $\widehat S$ satisfying \eqref{eq:alg-rho}.
Then, up to logarithmic factors,
\(
F(\widehat S)
\;\ge\;
\rho\cdot \max_{S\in\mathcal{S}_K}F(S)
\;-\;
(1+\rho)\Bigg[
O(\varepsilon^2)
+
\widetilde O\!\left(\sqrt{\frac{B^+ + B^-}{N_{\mathrm{eff}}}}\right)
+
O\!\left(\sqrt{\frac{\varepsilon}{R}}\right)
+
\mathrm{Var}(\mathrm{IPS})
\Bigg].
\)
\end{corollary}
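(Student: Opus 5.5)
The plan is to substitute the uniform bounds of \cref{thm:fixedS-finite} into the deviation terms of \cref{thm:end2end}, so the corollary follows by direct substitution. Under the hypothesis that the conditions of \cref{thm:fixedS-finite} hold uniformly over $S\in\mathcal{S}_K$, I would bound $\Delta_{\mathrm{str}}$ and $\Delta_{\mathrm{est}}$ separately and then insert both into \eqref{eq:end2end}.

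\textbf{Structural term.} By \cref{cor:welfare-approx}, $\Delta_{\mathrm{str}}=\sup_S|F(S)-\widetilde F(S)|\le B_{\mathrm{str}}\varepsilon^2$. The constant $B_{\mathrm{str}}$ does not depend on $S$, because the $C_i^\pm(G,K)$ depend only on local structure and the budget $K$. Hence $\Delta_{\mathrm{str}}=O(\varepsilon^2)$.

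\textbf{Estimation term.} For fixed $S$, the decomposition \eqref{eq:fixedS-bound} splits $|\widehat F(S)-\widetilde F(S)|$ into three parts.
\begin{itemize}
\item The response-curve errors at the points $k_i^\pm(S)$ are controlled by $\sup$-norms on the grid. The piecewise-linear interpolation is dominated by grid values, and $\|\cdot\|_\infty\le\|\cdot\|_2$. Combined with \cref{thm:estimation} and Jensen, this gives $\widetilde O(\sqrt{(B^++B^-)/N_{\mathrm{eff}}})+\mathrm{Var}(\mathrm{IPS})$. This bound is already uniform in $S$, since it concerns $\widehat f-f$ itself.
\item The $\widehat\alpha$ error is handled in the same way.
\item The Lipschitz simulation terms are controlled by \cref{prop:mc-concentration} and \cref{lem:moment-control}, giving $O(\sqrt{\varepsilon/R})$.
\end{itemize}

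\textbf{Main obstacle.} \cref{thm:end2end} needs a bound on $\sup_S$, not a pointwise expectation. The $\widehat k$ part is the only part that varies with $S$. I would handle it with a union bound over the at most $\binom{n}{\le K}\le n^K$ feasible seed sets, running the Bernstein bound \eqref{eq:mc-bernstein} at level $\delta/n^K$. This replaces $\log(2/\delta)$ by $K\log n+\log(2/\delta)$, which is absorbed into the ``up to logarithmic factors'' qualifier. The lower-order term $M_i^\pm\log(\cdot)/R$ is dominated for $R$ large.

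With this, $\Delta_{\mathrm{est}}$ is bounded, with high probability, by the bracketed statistical rates. Plugging both bounds into \eqref{eq:end2end} gives the stated inequality with the factor $(1+\rho)$.
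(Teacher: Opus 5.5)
Your proposal is correct and follows the same route as the paper, which bounds $\Delta_{\mathrm{str}}$ via Corollary~\ref{cor:welfare-approx} and $\Delta_{\mathrm{est}}$ via the rates of Theorem~\ref{thm:fixedS-finite}, then substitutes both into \eqref{eq:end2end}. The only difference is that the paper takes the hypothesis ``conditions hold uniformly over $S\in\mathcal{S}_K$'' as directly supplying the $\sup_S$ bound, whereas you justify that uniformity for the simulation term with an explicit union bound over $\mathcal{S}_K$; this is a sound refinement, costing only a $\sqrt{K\log n}$ factor that the stated logarithmic factors absorb.
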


Corollary~\ref{cor:opt-rate} completes the identification--estimation--optimization chain promised in the Introduction:
we obtain an approximation guarantee for the true steady-state causal welfare, where (i) diffusion path dependence enters only through a second-order bias,
(ii) statistical error scales with effective sample size and bin complexity, and (iii) propagation strength also reduces simulation variance.


\section{Experiments}
\label{sec:experiments}

We evaluate CIM on five datasets to validate the theoretical findings. We address three questions:
\textbf{RQ1}:Can CIM achieve higher steady-state causal outcomes than IM baselines under the same budget?
\textbf{RQ2}:How robust is CIM to outcome noise and violations of the weak-propagation assumption?
\textbf{RQ3}:Which CIM components are essential, and how sensitive is the method to the seed budget $K$?

\begin{table}[ht]
\caption{RQ1 Performance comparison ($K=15$).}
\label{tab:perf}
\vskip 0.1in
\begin{center}
\begin{small}
\begin{sc}
\resizebox{\columnwidth}{!}{
\begin{tabular}{lccccc}
\toprule
Methods & GoodReads & Contact & Contact-Pri & Email & SD-100 \\
\midrule
Baseline & 2770.45 & 325.42 & 156.00 & 657.22 & 89.01 \\
Random & 2706.25 & 323.92 & 94.80 & 652.46 & 80.12 \\
Degree & 2725.64 & 325.38 & 149.00 & 657.57 & 89.12 \\
\textbf{CIM} & \textbf{2774.02} & \textbf{326.07} & \textbf{164.00} & 654.93 & \textbf{89.98} \\
\bottomrule
\end{tabular}
}
\end{sc}
\end{small}
\end{center}
\vskip -0.1in
\end{table}

\begin{table}[ht]
\caption{Seed selection efficiency (wall-clock time).}
\label{tab:time}
\vskip 0.1in
\begin{center}
\begin{small}
\begin{sc}
\resizebox{\columnwidth}{!}{
\begin{tabular}{lcccc}
\toprule
Methods & GoodReads & Contact & Contact-Pri & Email \\
\midrule
Baseline & 2.21s & 7.88s & 25.87s & 286.43s \\
Random & $<$1ms & $<$1ms & $<$1ms & $<$1ms \\
Degree & $<$1ms & $<$1ms & $<$1ms & $<$1ms \\
\textbf{CIM} & \textbf{22.7ms} & \textbf{20.2ms} & \textbf{32.9ms} & \textbf{159.5ms} \\
\bottomrule
\end{tabular}
}
\end{sc}
\end{small}
\end{center}
\vskip -0.1in
\end{table}

\textbf{Settings.} We evaluate CIM on five datasets (GoodReads, Contact, Contact-Pri, Email, SD-100). Diffusion follows an Independent Cascade--style process until convergence to $z_\infty(S)$. Node exposures are computed from predefined neighborhoods, with expected exposures estimated via Monte Carlo simulation ($R=200$). Exposure--response functions are learned offline using shape-constrained regression. CIM is compared with Greedy IM (Baseline), Degree, and Random under the same budget $K$; unless stated otherwise, $K=15$, averaged over 10 runs.

\subsection{General Performance (RQ1)}
Tables~1 and~2 show that CIM consistently outperforms influence-maximization baselines, with the largest gains on Contact-Pri, indicating that reach maximization can be suboptimal when outcomes exhibit saturation. CIM also achieves millisecond-level runtime, substantially faster than Greedy IM on large graphs.

\begin{figure}[ht]
\vskip 0.2in
\begin{center}
\begin{subfigure}{0.49\columnwidth}
    \centering
    \includegraphics[width=\linewidth]{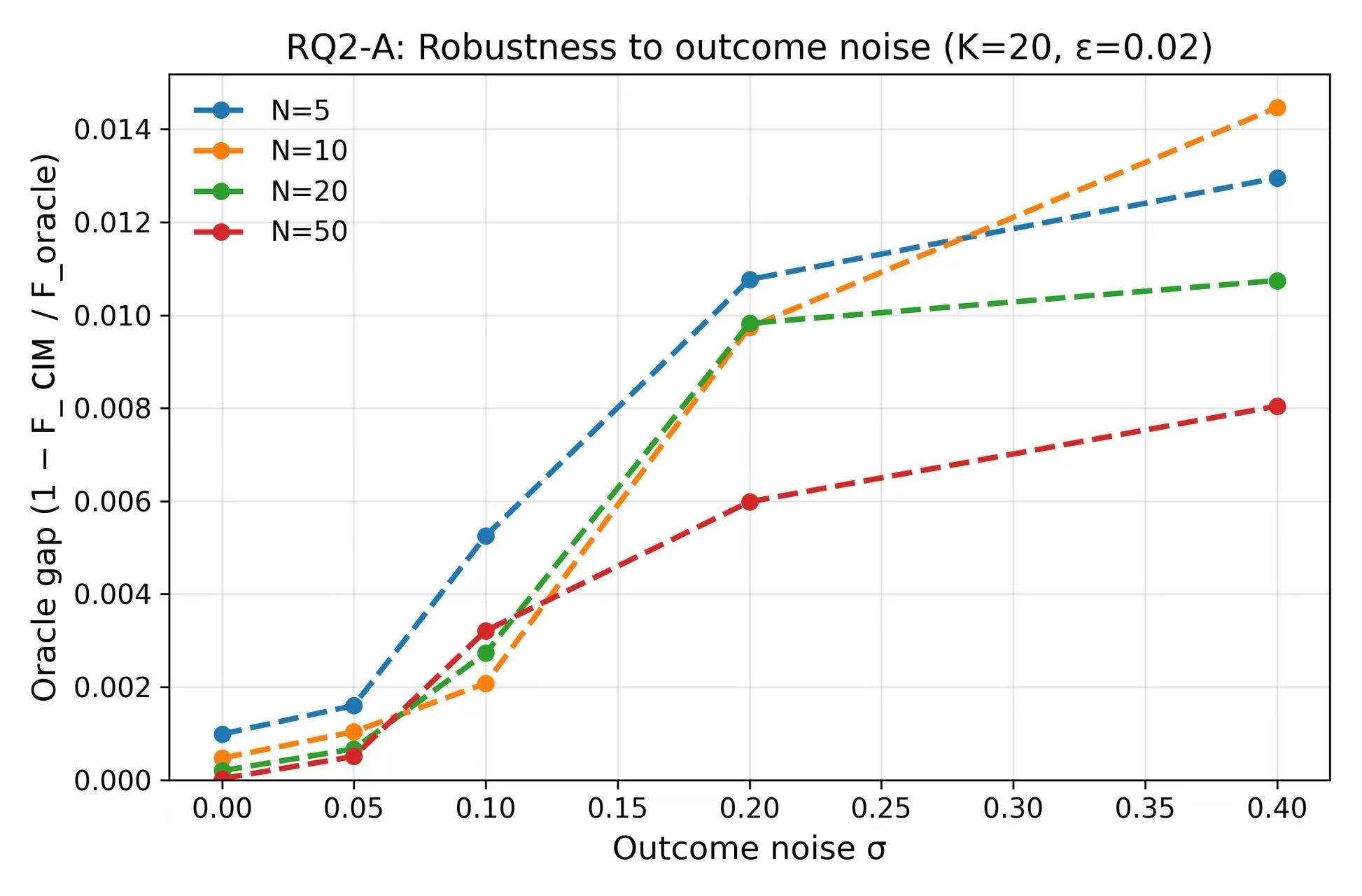}
    \caption{Gap vs. noise $\sigma$}
\end{subfigure}
\hfill
\begin{subfigure}{0.49\columnwidth}
    \centering
    \includegraphics[width=\linewidth]{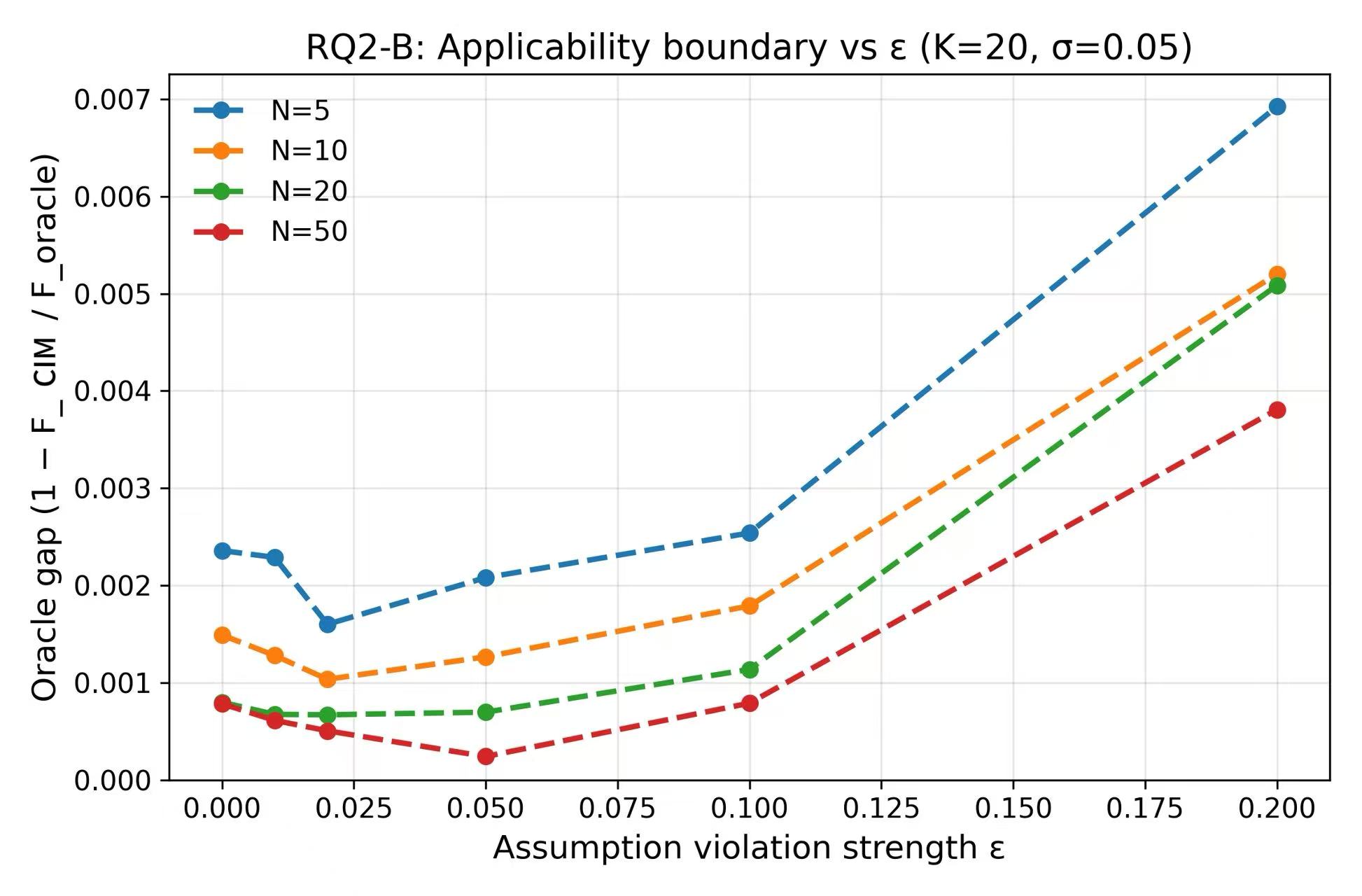}
    \caption{Gap vs. violation $\epsilon$}
\end{subfigure}
\caption{RQ2: Robustness analysis on GoodReads ($K=20$).}
\label{fig:rq2}
\end{center}
\vskip -0.2in
\end{figure}

\begin{figure}[ht]
\vskip 0.2in
\begin{center}
\begin{subfigure}{0.49\columnwidth}
    \centering
    \includegraphics[width=\linewidth]{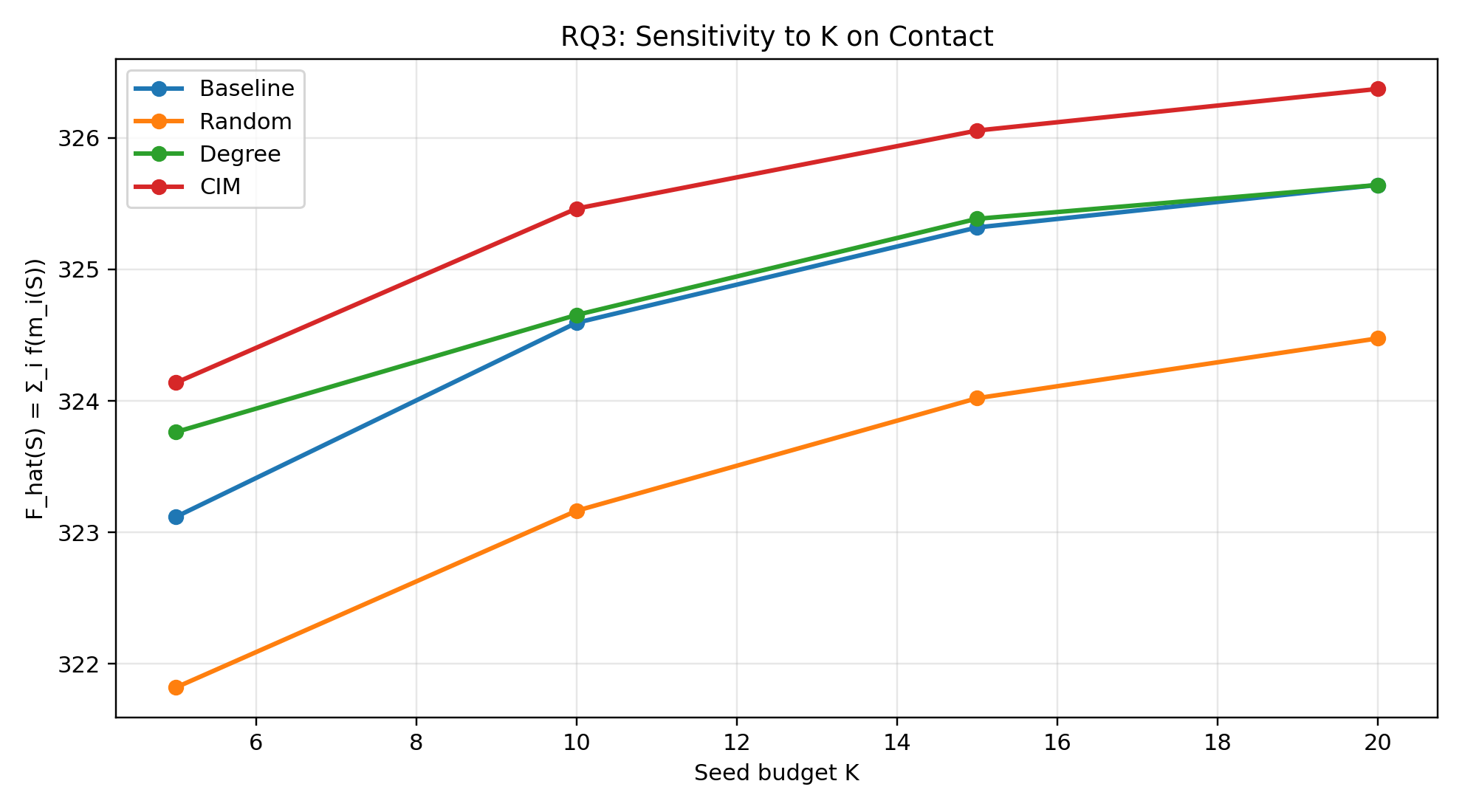}
    \caption{Contact}
\end{subfigure}
\hfill
\begin{subfigure}{0.49\columnwidth}
    \centering
    \includegraphics[width=\linewidth]{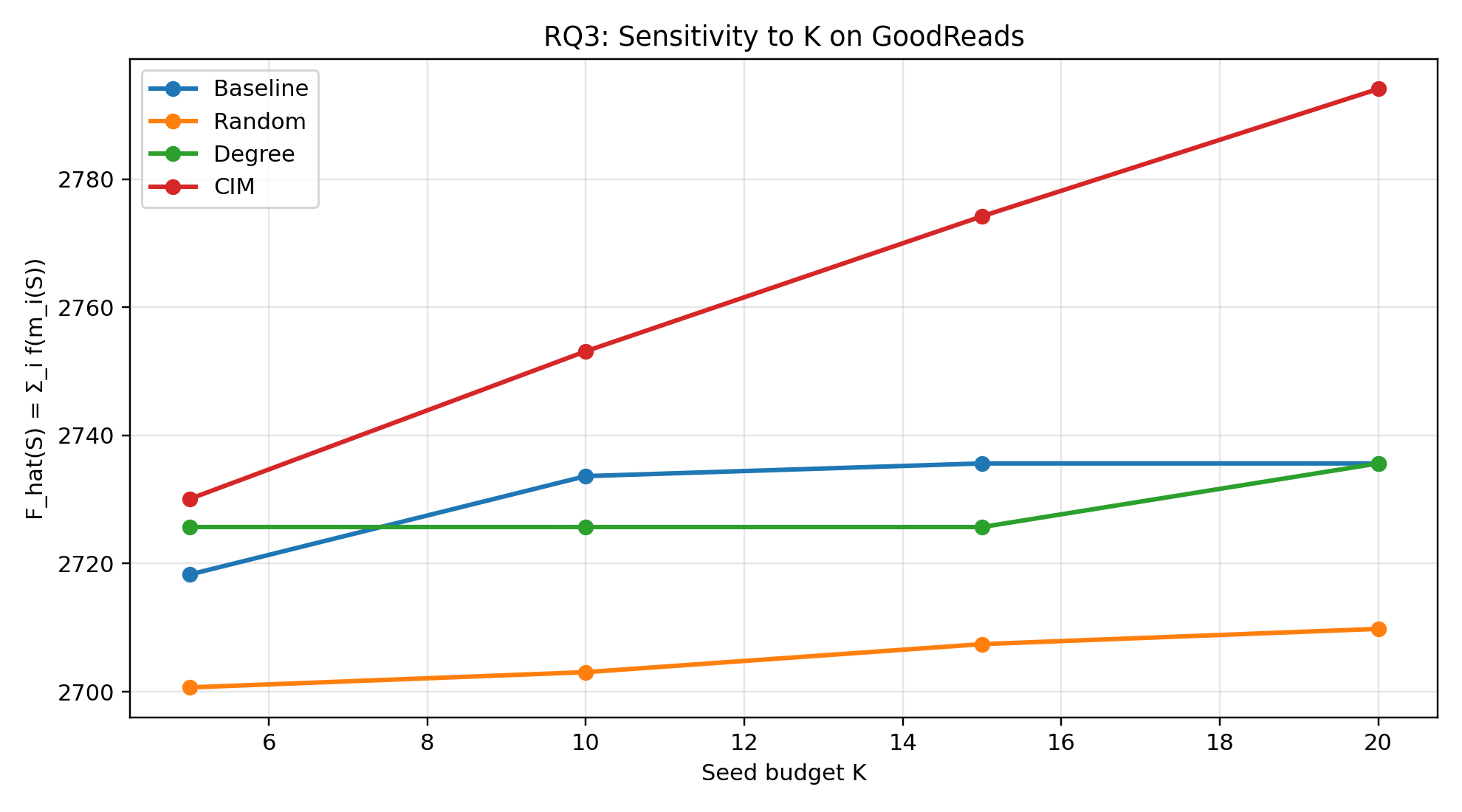}
    \caption{GoodReads}
\end{subfigure}
\caption{RQ3: Sensitivity to seed budget $K$.}
\label{fig:rq3}
\end{center}
\vskip -0.2in
\end{figure}

\subsection{Robustness and Sensitivity (RQ2 \& RQ3)}
\textbf{RQ2.}
We evaluate robustness along two axes: outcome noise in response estimation and violations of the weak-propagation assumption.
Figure~1 (RQ2-A) shows the oracle gap as outcome noise $\sigma$ increases.
While noise leads to larger estimation error, the degradation is smooth and remains bounded even at high noise levels.
Moreover, increasing the number of samples $N$ consistently reduces the oracle gap, indicating stable exposure--response estimation under noise.
Figure~2 (RQ2-B) examines robustness to assumption violations by increasing the propagation strength $\epsilon$.
When $\epsilon$ is small, performance remains nearly unchanged; as $\epsilon$ grows, the oracle gap increases approximately linearly rather than abruptly.
This behavior aligns with the theoretical prediction that the exposure-based surrogate incurs higher-order approximation error under stronger propagation, and suggests that CIM degrades gracefully beyond the ideal regime.

\textbf{RQ3.}
Figures~3 and~4 study sensitivity to the seed budget $K$ on Contact and GoodReads.
As $K$ increases, CIM’s advantage over Baseline, Degree, and Random methods becomes more pronounced.
While all methods benefit from larger budgets, reach-based baselines exhibit diminishing gains, especially at higher $K$, reflecting redundant influence on already well-exposed nodes.
In contrast, CIM continues to achieve higher marginal improvements by explicitly modeling diminishing marginal returns through concave exposure--response functions.
The widening performance gap at larger $K$ therefore indicates that CIM’s gains are structural, arising from better alignment with steady-state causal welfare rather than incidental spread maximization.

\section{Conclusion and Discussion}\label{sec:discussion}

In this work, we have established a connection between the combinatorial structure of influence maximization and the statistical requirements of causal inference under interference. The central insight driving our framework is that the complexity of dynamic path dependence need not be an insurmountable barrier to welfare optimization; under the practically relevant regime of low-probability propagation and diminishing returns, steady-state outcomes compress efficiently into exposure-based functionals. This structural reduction allows us to bypass the intractability of full-history counterfactuals while retaining a bound on the causal approximation error that is second-order in the interaction strength. By coupling shape-constrained estimation with greedy optimization, we provide the first end-to-end guarantees that encompass structural approximation bias, statistical learning rates, and algorithmic approximation ratios within a single causal objective.

Our analysis opens several directions for future research. While the low-probability propagation assumption, analogous to the weak-coupling limit in interacting particle systems, is satisfied in many social and epidemiological settings where individual transmission rates are small, extending our guarantees to regimes of critical or supercritical diffusion remains a significant challenge. In high-propagation settings, second-order coincidences and redundant paths become non-negligible, potentially requiring higher-order exposure mappings or alternative concentration arguments to control the welfare gap. Moreover, the instability of the potential outcome in the infinite horizon also requires further exploration.

\newpage
\section*{Impact Statement}
This paper presents work whose goal is to advance the field of machine learning. There are many potential societal consequences of our work, none of which we feel must be specifically highlighted here.

\bibliography{example_paper}
\bibliographystyle{icml2026}

\newpage
\appendix
\onecolumn

\section{Literature review}

Our problem sits at the intersection of \emph{influence maximization under diffusion} and \emph{causal inference with interference}. The key distinction is that we optimize a \emph{steady-state causal welfare} objective, where activation is a \emph{treatment} that propagates over time, rather than an outcome to be counted.

\textbf{Influence maximization and diffusion-aware seed selection}
The classical influence maximization (IM) literature studies seed selection to maximize the \emph{expected cascade size} under diffusion models such as Independent Cascade (IC) and Linear Threshold (LT), leveraging (approximate) submodularity to obtain constant-factor guarantees and scalable algorithms \citep{kempe2003maximizing,nemhauser1978analysis}.
A large body of follow-up work focuses on computational scalability and tight approximation under IC/LT and more general triggering models, including near-optimal-time and near-linear-time algorithms based on reverse influence sampling and martingale arguments \citep{borgs2014nearlyoptimal,tang2014tim,tang2015martingale}, as well as practical heuristics for large graphs \citep{chen2010scalable}.
These works deliver powerful \emph{algorithmic} guarantees, but their objectives typically treat ``influence'' as reach (number of activated nodes), hence do not directly address welfare questions where activation is only an intermediate mechanism and outcomes may exhibit heterogeneity, saturation, or negative spillovers.

In contrast, our framework \emph{contains IM as a special case} when each individual's outcome is identified with activation and the response is linear in exposure, recovering the classical reach objective \citep{kempe2003maximizing}.
Beyond that special case, we depart from IM in two essential ways:
(i) we define a \emph{causal estimand} at the diffusion \emph{steady state}, and
(ii) we allow welfare to be a general function of exposure with diminishing returns and potentially adverse effects, which cannot be faithfully captured by reach alone.

\textbf{Causal inference under interference and exposure mapping}
Causal inference with interference formalizes outcomes depending on others' treatment assignments and develops design- and model-based estimands and estimators \citep{hudgens2008toward,aronow2017estimating,eckles2017design,athey2018exact}.
A common methodological device is \emph{exposure mapping}, which compresses the high-dimensional assignment vector into a lower-dimensional summary (e.g., treated-neighbor counts) so that causal effects can be defined and estimated at the exposure level \citep{aronow2017estimating}.
Related work on peer effects and social influence develops estimands and identification/estimation strategies for causal influence in networks \citep{toulis2013peer}.

Despite major progress, most exposure-mapping approaches are tailored to a \emph{single-shot} assignment, where exposure is computed directly from the initial treatment vector.
When treatment \emph{propagates dynamically}, the relevant counterfactual for welfare depends on the diffusion \emph{limit state} and, in principle, on the entire propagation history, making na\"ive exposure mapping either misspecified or systematically biased.
This is precisely the barrier our paper targets: \emph{steady-state} causal inference and optimization under temporal propagation.

\textbf{Bridging diffusion and causality: toward causal influence maximization}
Several recent directions aim to ``causalize'' influence objectives by incorporating heterogeneous node values or counterfactual notions into diffusion-based optimization.
For example, work under the banner of \emph{Causal Influence Maximization (CauIM)} argues that node gains can be environment-sensitive and unobservable from a single realized cascade, motivating causal tools within IM pipelines \citep{su2023cauim}.
In parallel, econometric and statistical work studies \emph{policy targeting} with interference, connecting welfare criteria to treatment assignment rules \citep{viviano2025policy}.

Existing causal-IM style approaches typically adopt a \emph{plug-in} strategy: posit a diffusion model, estimate some value function, and then apply greedy-style maximization.
What has been missing is a principled mechanism showing when (and why) the dynamic, history-dependent steady-state counterfactual can be reduced to a tractable low-dimensional object with a controlled approximation bias.
Our main theoretical contribution provides exactly this: under low-probability propagation and diminishing returns, we prove a \emph{second-order} structural reduction from path-dependent steady-state outcomes to expected exposure counts, with an explicit curvature-controlled error bound.
This yields an endogenous technical challenge (path dependence) that cannot be resolved by simply adding ``estimation + greedy''; instead it requires a new approximation argument that justifies the exposure-based objective as a faithful proxy for steady-state causal welfare.

In short, compared to IM we replace reach with a steady-state causal welfare estimand; compared to standard interference methods we explicitly model temporal propagation and provide a reduction that makes steady-state optimization feasible; compared to emerging causal-IM bridges we contribute a sharper theory---a second-order reduction guarantee---and a shape-constrained estimation + submodular optimization pipeline whose guarantees are tied to the \emph{causal} objective rather than purely algorithmic spread.

\appendix
\section{Proofs for Section~\ref{sec:method}}
\label{app:proofs}

\subsection{Preliminaries: discrete concavity, interpolation, and curvature bounds}
\label{app:prelim}

Throughout, let $f:\mathbb{Z}_{\ge 0}\to\mathbb{R}$ be a nondecreasing discretely concave function, i.e.,
\[
0\le \Delta f(t):=f(t+1)-f(t)\le \Delta f(t-1)\qquad \forall t\ge 1.
\]
Equivalently, the discrete second difference satisfies $\Delta^2 f(t):=f(t+1)-2f(t)+f(t-1)\le 0$ for all $t\ge 1$.
Recall the curvature parameter
\[
\kappa(f):=\sup_{t\ge 1}\bigl(\Delta f(t-1)-\Delta f(t)\bigr)=\sup_{t\ge 1}\bigl(-\Delta^2 f(t)\bigr)\in[0,\infty).
\]

\paragraph{Piecewise-linear extension.}
Define the piecewise-linear interpolation $\bar f:[0,\infty)\to\mathbb{R}$ by
\begin{equation}
\label{eq:interp}
\bar f(x):=
(1-\theta) f(m)+\theta f(m+1)
\quad \text{whenever } x=m+\theta,\ m\in\mathbb{Z}_{\ge 0},\ \theta\in[0,1).
\end{equation}
In the main text we write $f(x)$ for $\bar f(x)$ whenever the argument is not an integer.

\begin{lemma}[Concavity of the interpolation]
\label{lem:interp-concave}
If $f$ is discretely concave and nondecreasing on $\mathbb{Z}_{\ge 0}$, then $\bar f$ defined in \eqref{eq:interp}
is concave and nondecreasing on $[0,\infty)$.
Moreover, on each interval $[m,m+1]$, $\bar f$ has slope $\Delta f(m)$, and these slopes are nonincreasing in $m$.
\end{lemma}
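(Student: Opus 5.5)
The plan is to prove the three claims of Lemma~\ref{lem:interp-concave} in the order slopes, monotonicity, concavity. Each follows directly from the definition \eqref{eq:interp} together with the discrete inequalities $0\le \Delta f(t)\le \Delta f(t-1)$.

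First, I will compute $\bar f$ on a fixed interval $[m,m+1]$. For $x=m+\theta$ with $\theta\in[0,1)$, \eqref{eq:interp} gives
\[
\bar f(x)=f(m)+\theta\,\Delta f(m)=f(m)+(x-m)\,\Delta f(m).
\]
At $x=m+1$ the formula for the next interval (with $m+1$, $\theta=0$) gives $f(m+1)=f(m)+\Delta f(m)$, which agrees with the left limit. So $\bar f$ is continuous, affine on each $[m,m+1]$, and has slope $\Delta f(m)$ there. The hypothesis $\Delta f(m)\le\Delta f(m-1)$ for $m\ge1$ then shows the slopes are nonincreasing in $m$. Since every slope is $\ge0$, $\bar f$ is nondecreasing on each piece, and by continuity it is nondecreasing on all of $[0,\infty)$.

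Next, for concavity I will use the standard fact that a continuous piecewise-affine function with nonincreasing slopes is concave. To keep the argument self-contained, I would show that $\bar f$ equals the minimum of its affine pieces:
\[
\bar f(x)=\min_{m\ge0}\ell_m(x),\qquad \ell_m(x):=f(m)+(x-m)\Delta f(m).
\]
Fix $x\in[k,k+1]$. For $m<k$, the line $\ell_m$ agrees with $\bar f$ at $m+1\le k$, and to the right of $m+1$ it grows with slope $\Delta f(m)$. This is at least every subsequent slope of $\bar f$, so $\ell_m(x)\ge\bar f(x)$. Symmetrically, for $m>k$ the line $\ell_m$ agrees with $\bar f$ at $m\ge x$. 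Moving leftward from $m$, $\ell_m$ decreases at rate $\Delta f(m)$ while $\bar f$ decreases at rates $\Delta f(j)\ge\Delta f(m)$ for $j<m$, so again $\ell_m(x)\ge\bar f(x)$. Equality holds at $m=k$. A pointwise infimum of affine functions is concave, which gives the claim.

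The only step needing care is this second one, specifically the inequality $\ell_m\ge\bar f$ for $m\ne k$. I would write it as a telescoping sum of slope differences, e.g.\ for $m<k$:
\[
\ell_m(x)-\bar f(x)=\sum_{j=m+1}^{k-1}\bigl(\Delta f(m)-\Delta f(j)\bigr)+(x-k)\bigl(\Delta f(m)-\Delta f(k)\bigr)\ge0.
\]
The case $m>k$ is handled the same way. Everything else is immediate from the definitions.
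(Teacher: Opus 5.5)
Your proof is correct and follows the same route as the paper: identify $\bar f$ as affine with slope $\Delta f(m)$ on $[m,m+1]$, then read off monotonicity from $\Delta f(m)\ge 0$ and concavity from the nonincreasing slopes. The only difference is that you also check continuity at the integer breakpoints and prove the ``nonincreasing slopes $\Rightarrow$ concave'' fact via the representation $\bar f=\min_{m}\ell_m$; the paper simply asserts both.
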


\begin{proof}
By definition, $\bar f$ is linear on each unit interval $[m,m+1]$ with slope $\Delta f(m)$.
Discrete concavity implies $\Delta f(m)\ge \Delta f(m+1)$ for all $m\ge 0$, hence the slopes of the linear pieces are nonincreasing.
A piecewise-linear function with nonincreasing slopes is concave.
Nondecreasingness follows from $\Delta f(m)\ge 0$ for all $m$.
\end{proof}

The next lemma is a discrete analogue of a second-order Taylor bound: concavity implies an ``upper tangent'' linear bound,
and curvature controls the departure from linearity through a quadratic term involving the falling factorial $(u)_2=u(u-1)$.

\begin{lemma}[Discrete second-order bound controlled by curvature]
\label{lem:discrete-taylor}
Let $f:\mathbb{Z}_{\ge 0}\to\mathbb{R}$ be discretely concave with curvature $\kappa=\kappa(f)$.
Then for any integers $t\ge 0$ and $u\ge 0$,
\begin{equation}
\label{eq:upper-linear}
f(t+u)\le f(t)+u\,\Delta f(t),
\end{equation}
and
\begin{equation}
\label{eq:lower-quadratic}
f(t+u)\ge f(t)+u\,\Delta f(t)-\frac{\kappa}{2}\,(u)_2.
\end{equation}
\end{lemma}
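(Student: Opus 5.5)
The plan is to telescope $f(t+u)-f(t)$ into a sum of first differences, $f(t+u)-f(t)=\sum_{s=0}^{u-1}\Delta f(t+s)$, with the convention that the empty sum is zero, so the case $u=0$ is trivial. Both inequalities then reduce to comparing each increment $\Delta f(t+s)$ with the base increment $\Delta f(t)$. Discrete concavity gives the comparison in one direction. The curvature parameter $\kappa$ caps how fast the increments can fall, which gives the other direction.

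For the upper bound \eqref{eq:upper-linear}, I will use that discrete concavity makes the sequence $s\mapsto \Delta f(t+s)$ nonincreasing. This is just the defining inequality $\Delta f(m)\le\Delta f(m-1)$ iterated. Hence $\Delta f(t+s)\le \Delta f(t)$ for every $s\ge 0$, and summing over $s=0,\dots,u-1$ yields $f(t+u)\le f(t)+u\,\Delta f(t)$. Monotonicity of $f$ is not needed here.

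For the lower bound \eqref{eq:lower-quadratic}, I will write each increment as a telescoping sum of second differences:
\[
\Delta f(t)-\Delta f(t+s)=\sum_{r=1}^{s}\bigl(\Delta f(t+r-1)-\Delta f(t+r)\bigr).
\]
Each summand has the form $\Delta f(m-1)-\Delta f(m)$ with $m=t+r\ge 1$, so by definition of $\kappa$ it lies in $[0,\kappa]$. This gives $\Delta f(t+s)\ge \Delta f(t)-s\kappa$. Summing over $s=0,\dots,u-1$ gives
\[
f(t+u)-f(t)\ge u\,\Delta f(t)-\kappa\sum_{s=0}^{u-1}s=u\,\Delta f(t)-\frac{\kappa}{2}u(u-1),
\]
and $u(u-1)$ is exactly $(u)_2$.

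There is no real obstacle; the main care points are the indexing. First, I must check that every second difference invoked has index $m\ge 1$, matching the range $t\ge1$ in the definition of $\kappa$. Second, I must verify that the arithmetic sum produces the falling factorial $(u)_2$ rather than $u^2$. If $\kappa=\infty$ were allowed, the bound would be vacuous, but $\kappa$ is finite as defined in the preliminaries.
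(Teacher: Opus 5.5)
Your proposal is correct and follows the paper's proof essentially step for step: the same telescoping into first differences, the same use of nonincreasing increments for \eqref{eq:upper-linear}, and the same decomposition into slope drops, each bounded by $\kappa$, for \eqref{eq:lower-quadratic}. The only cosmetic difference is that you bound each increment pointwise by $\Delta f(t)-s\kappa$ before summing, whereas the paper first exchanges the double sum to get $\sum_{s=1}^{u-1}(u-s)\delta_s$ (with $\delta_s$ the slope drops) and then applies $\delta_s\le\kappa$.
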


\begin{proof}
For \eqref{eq:upper-linear}, note that discrete concavity implies $\Delta f(t+r)\le \Delta f(t)$ for all $r\ge 0$.
Hence
\[
f(t+u)-f(t)=\sum_{r=0}^{u-1}\Delta f(t+r)\le \sum_{r=0}^{u-1}\Delta f(t)=u\,\Delta f(t).
\]

For \eqref{eq:lower-quadratic}, define the slope drops
\[
\delta_s:=\Delta f(t+s-1)-\Delta f(t+s)\ge 0,\qquad s\ge 1.
\]
Then $\delta_s\le \kappa$ by definition of curvature.
Also, for $r\ge 0$,
\(
\Delta f(t+r)=\Delta f(t)-\sum_{s=1}^{r}\delta_s
\)
(with the empty sum equal to $0$).
Therefore
\begin{align*}
f(t+u)-f(t)
&=\sum_{r=0}^{u-1}\Delta f(t+r)
=\sum_{r=0}^{u-1}\Bigl(\Delta f(t)-\sum_{s=1}^{r}\delta_s\Bigr)\\
&=u\,\Delta f(t)-\sum_{r=0}^{u-1}\sum_{s=1}^{r}\delta_s
=u\,\Delta f(t)-\sum_{s=1}^{u-1}(u-s)\delta_s\\
&\ge u\,\Delta f(t)-\kappa\sum_{s=1}^{u-1}(u-s)
=u\,\Delta f(t)-\kappa\sum_{r=1}^{u-1}r\\
&=u\,\Delta f(t)-\frac{\kappa}{2}\,u(u-1)
=u\,\Delta f(t)-\frac{\kappa}{2}\,(u)_2,
\end{align*}
which proves \eqref{eq:lower-quadratic}.
\end{proof}

We now convert Lemma~\ref{lem:discrete-taylor} into a curvature-controlled Jensen gap bound.

\begin{lemma}[Curvature-controlled Jensen gap via factorial moment]
\label{lem:jensen-gap}
Let $U$ be a $\mathbb{Z}_{\ge 0}$-valued random variable with $\mu:=\mathbb{E}[U]$.
Let $f$ be discretely concave with curvature $\kappa=\kappa(f)$, and let $\bar f$ be its interpolation.
Then for any integer $t\ge 0$,
\begin{equation}
\label{eq:jensen-gap}
0\le \bar f(t+\mu)-\mathbb{E}\bigl[f(t+U)\bigr]\le \frac{\kappa}{2}\,\mathbb{E}\bigl[(U)_2\bigr].
\end{equation}
\end{lemma}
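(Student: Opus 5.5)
The plan is to sandwich both $\bar f(t+\mu)$ and $\mathbb{E}[f(t+U)]$ around the common linear quantity $f(t)+\mu\,\Delta f(t)$, using Lemma~\ref{lem:interp-concave} for the first and Lemma~\ref{lem:discrete-taylor} for the second. Throughout I assume $\mu<\infty$; if $\mathbb{E}[(U)_2]=\infty$ the upper bound is vacuous, so I may also assume $\mathbb{E}[(U)_2]<\infty$.

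\emph{Lower bound ($0\le$).} By Lemma~\ref{lem:interp-concave}, $\bar f$ is concave on $[0,\infty)$. Its concavity uses only the nonincreasing slopes, not monotonicity. Moreover $\bar f$ agrees with $f$ at integers, so $f(t+U)=\bar f(t+U)$ almost surely. Jensen's inequality for the concave function $\bar f$ then gives
\[
\mathbb{E}[f(t+U)]=\mathbb{E}[\bar f(t+U)]\le \bar f(t+\mathbb{E}[t+U]-t)=\bar f(t+\mu).
\]

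\emph{Upper bound.} I first show a tangent-line bound for the interpolation at a non-integer point: for every real $x\ge t$,
\[
\bar f(x)\le f(t)+(x-t)\,\Delta f(t).
\]
This holds because, by Lemma~\ref{lem:interp-concave}, $\bar f$ has slope $\Delta f(m)\le\Delta f(t)$ on each interval $[m,m+1]$ with $m\ge t$. Integrating the slope from $t$ to $x$ gives the claim. It extends \eqref{eq:upper-linear} to real arguments. Since $\mu\ge 0$, applying it with $x=t+\mu$ yields $\bar f(t+\mu)\le f(t)+\mu\,\Delta f(t)$.

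Next I take expectations in \eqref{eq:lower-quadratic}, evaluated pointwise at $u=U$:
\[
\mathbb{E}[f(t+U)]\ge f(t)+\mu\,\Delta f(t)-\tfrac{\kappa}{2}\,\mathbb{E}[(U)_2].
\]
Subtracting this from the previous display gives $\bar f(t+\mu)-\mathbb{E}[f(t+U)]\le \tfrac{\kappa}{2}\mathbb{E}[(U)_2]$, which is the claimed upper bound.

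The only delicate point is the tangent bound at the non-integer point $t+\mu$. Lemma~\ref{lem:discrete-taylor} states its bound only at integers, so the piecewise-linear slope argument above is what bridges the gap. Beyond that, the remaining care is integrability: $\mathbb{E}[f(t+U)]$ is well defined because $|f(t+u)-f(t)|\le u\,|\Delta f(t)|+\tfrac{\kappa}{2}(u)_2$, and both terms have finite expectation under the assumptions above.
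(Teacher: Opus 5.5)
Your proposal is correct and matches the paper's proof essentially step for step. For nonnegativity you use Jensen with the concave interpolation $\bar f$. For the upper bound you combine the supporting-line bound $\bar f(t+\mu)\le f(t)+\mu\,\Delta f(t)$ with the expectation of \eqref{eq:lower-quadratic} taken at $u=U$. Your remarks on integrability, and on concavity of $\bar f$ needing only nonincreasing slopes, are careful additions the paper omits, but they do not change the argument.
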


\begin{proof}
The nonnegativity $\bar f(t+\mu)\ge \mathbb{E}[f(t+U)]$ follows from concavity of $\bar f$ (Lemma~\ref{lem:interp-concave}) and Jensen's inequality:
\[
\bar f(t+\mu)=\bar f\bigl(\mathbb{E}[t+U]\bigr)\ge \mathbb{E}\bigl[\bar f(t+U)\bigr]=\mathbb{E}\bigl[f(t+U)\bigr],
\]
where the last equality holds because $t+U$ is integer-valued.

For the upper bound, by concavity of $\bar f$, the line of slope $\Delta f(t)$ from $t$ is a supporting line for all $x\ge t$
(because slopes of $\bar f$ on $[m,m+1]$ equal $\Delta f(m)$ and are nonincreasing in $m$).
Hence
\begin{equation}
\label{eq:supporting}
\bar f(t+\mu)\le \bar f(t)+\mu\,\Delta f(t)=f(t)+\mu\,\Delta f(t).
\end{equation}
On the other hand, applying Lemma~\ref{lem:discrete-taylor} to the integer $U$ and taking expectations yields
\[
\mathbb{E}\bigl[f(t+U)\bigr]
\ge f(t)+\mathbb{E}[U]\,\Delta f(t)-\frac{\kappa}{2}\,\mathbb{E}\bigl[(U)_2\bigr]
= f(t)+\mu\,\Delta f(t)-\frac{\kappa}{2}\,\mathbb{E}\bigl[(U)_2\bigr].
\]
Subtracting this inequality from \eqref{eq:supporting} proves \eqref{eq:jensen-gap}.
\end{proof}

\subsection{Proof of Proposition~\ref{prop:seedset-ips-id} (seed-set IPS identification)}
\label{app:proof-ips}

\begin{proof}[Proof of Proposition~\ref{prop:seedset-ips-id}]
Fix $S\in\mathcal{S}_K$.
By Assumption~\ref{assump:replication}, in replication $\ell$ we observe outcomes
\(
Y_{i\ell}=Y_i^{(\ell)}(z_{\infty,\ell}(Z_\ell))
\)
and the seed set $Z_\ell$ is drawn from $\pi_\ell(\cdot\mid X_\ell)$.
Define $W_\ell(S)=\mathbb{I}(Z_\ell=S)/\pi_\ell(S\mid X_\ell)$.
Using iterated expectation,
\begin{align*}
\mathbb{E}\Bigl[W_\ell(S)\sum_{i\in V}Y_{i\ell}\Bigr]
&=\mathbb{E}\Bigl[\ \mathbb{E}\Bigl[W_\ell(S)\sum_{i\in V}Y_i^{(\ell)}(z_{\infty,\ell}(Z_\ell))\ \Big|\ X_\ell\Bigr]\Bigr]\\
&=\mathbb{E}\Bigl[\ \sum_{z\in\mathcal{S}_K}\pi_\ell(z\mid X_\ell)\cdot
\frac{\mathbb{I}(z=S)}{\pi_\ell(S\mid X_\ell)}\cdot
\mathbb{E}\Bigl[\sum_{i\in V}Y_i^{(\ell)}(z_{\infty,\ell}(z))\ \Big|\ X_\ell, Z_\ell=z\Bigr]\Bigr]\\
&=\mathbb{E}\Bigl[\ \mathbb{E}\Bigl[\sum_{i\in V}Y_i^{(\ell)}(z_{\infty,\ell}(S))\ \Big|\ X_\ell\Bigr]\Bigr]
=\mathbb{E}\Bigl[\sum_{i\in V}Y_i(z_\infty(S))\Bigr]\\
&=F(S),
\end{align*}
where we used consistency of potential outcomes and the fact that the distribution of $(z_{\infty,\ell}(\cdot),Y^{(\ell)}(\cdot))$ is identical across replications.
This proves the identification formula.
Unbiasedness of $\widehat F_{\mathrm{IPS}}(S)=\frac{1}{N}\sum_{\ell=1}^N W_\ell(S)\sum_{i\in V}Y_{i\ell}$ follows immediately by linearity of expectation.
\end{proof}

\subsection{Proof of Theorem~\ref{thm:exposure-compression} (structural reduction)}
\label{app:proof-structural}

\begin{proof}[Proof of Theorem~\ref{thm:exposure-compression}]
Fix $i$ and $S$.
Under Assumption~\ref{assump:exposure-separable}, for any activation state $z$,
\[
Y_i(z)=\alpha_i\,\mathbb{I}(i\in S)+f_i^+\bigl(K_i^+(z)\bigr)-f_i^-\bigl(K_i^-(z)\bigr),
\]
where the term $\alpha_i \mathbb{I}(i\in S)$ is deterministic given $S$.
Evaluating at $z=z_\infty(S)$ and taking expectation gives
\begin{equation}
\label{eq:Yi-exp}
\mathbb{E}[Y_i(z_\infty(S))]
=
\alpha_i\,\mathbb{I}(i\in S)
+
\mathbb{E}\Bigl[f_i^+\bigl(K_i^+(S)\bigr)\Bigr]
-
\mathbb{E}\Bigl[f_i^-\bigl(K_i^-(S)\bigr)\Bigr].
\end{equation}

Write $K_i^\pm(S)=|N_i^\pm\cap S|+U_i^\pm(S)$ where
\(
U_i^\pm(S)=\sum_{j\in N_i^\pm\setminus S} z_{j,\infty}(S)
\)
.
Let $\mu_i^\pm:=\mathbb{E}[U_i^\pm(S)]$ and note that
\[
k_i^\pm(S)=\mathbb{E}[K_i^\pm(S)]=|N_i^\pm\cap S|+\mu_i^\pm.
\]
Apply Lemma~\ref{lem:jensen-gap} to $f_i^\pm$ with $t=|N_i^\pm\cap S|$ and $U=U_i^\pm(S)$.
Since $f_i^\pm$ are discretely concave with curvature $\kappa_i^\pm$, we obtain
\begin{equation}
\label{eq:jensen-plus}
0\le f_i^+\bigl(k_i^+(S)\bigr)-\mathbb{E}\Bigl[f_i^+\bigl(K_i^+(S)\bigr)\Bigr]\le \frac{\kappa_i^+}{2}\,\mathbb{E}\bigl[(U_i^+(S))_2\bigr],
\end{equation}
and similarly
\begin{equation}
\label{eq:jensen-minus}
0\le f_i^-\bigl(k_i^-(S)\bigr)-\mathbb{E}\Bigl[f_i^-\bigl(K_i^-(S)\bigr)\Bigr]\le \frac{\kappa_i^-}{2}\,\mathbb{E}\bigl[(U_i^-(S))_2\bigr].
\end{equation}
Substituting \eqref{eq:jensen-plus}--\eqref{eq:jensen-minus} into \eqref{eq:Yi-exp} and rearranging yields
\[
\mathbb{E}[Y_i(z_\infty(S))]
-\alpha_i\mathbb{I}(i\in S)
-f_i^+\bigl(k_i^+(S)\bigr)
+f_i^-\bigl(k_i^-(S)\bigr)
=
-\bigl(f_i^+(k_i^+)-\mathbb{E}[f_i^+(K_i^+)]\bigr)
+\bigl(f_i^-(k_i^-)-\mathbb{E}[f_i^-(K_i^-)]\bigr).
\]
Taking absolute values and applying the triangle inequality together with \eqref{eq:jensen-plus}--\eqref{eq:jensen-minus} gives
\[
\Bigl|
\mathbb{E}[Y_i(z_\infty(S))]
-\alpha_i\mathbb{I}(i\in S)
-f_i^+\bigl(k_i^+(S)\bigr)
+f_i^-\bigl(k_i^-(S)\bigr)
\Bigr|
\le
\frac{\kappa_i^+}{2}\mathbb{E}\bigl[(U_i^+(S))_2\bigr]
+
\frac{\kappa_i^-}{2}\mathbb{E}\bigl[(U_i^-(S))_2\bigr],
\]
which has been proved.
\end{proof}

\subsection{Diffusion model for moment bounds}
\label{app:diffusion-model}

The remaining statements in Theorem~\ref{thm:exposure-compression} and Lemma~\ref{lem:moment-control} require controlling moments of
\(
U_i^\pm(S)=\sum_{j\in N_i^\pm\setminus S} z_{j,\infty}(S)
\)
under weak propagation.
Assumption~\ref{assump:lpp} alone does not imply such moment bounds unless one additionally controls how often each edge can be ``re-tried''
over an unbounded time horizon.
We therefore work under a standard progressive diffusion model admitting a live-edge representation.

\begin{assumption}[Independent live-edge (IC-type) diffusion: used only for moment bounds]
\label{assump:live-edge}
For each directed edge $e=(u,v)\in E$, let $L_e\sim \mathrm{Bernoulli}(p_{uv})$, independently across $e$.
Given a seed set $S$, define the random directed graph $G_L=(V,E_L)$ with $E_L:=\{e\in E: L_e=1\}$.
The diffusion limit satisfies
\[
z_{v,\infty}(S)=\mathbb{I}\bigl(v\in \mathrm{Reach}_{G_L}(S)\bigr),
\]
i.e., $v$ is active in the limit if and only if it is reachable from $S$ in $G_L$.
\end{assumption}

Assumption~\ref{assump:live-edge} is satisfied by the Independent Cascade model via the classical live-edge coupling.
Under this representation, low-probability propagation (Assumption~\ref{assump:lpp}) simply reads $\max_{e\in E}\mathbb{P}(L_e=1)\le \varepsilon$.

For a seed set $S$ and node $v\notin S$, let $\mathcal{P}(S\to v)$ denote the set of \emph{simple} directed paths in $G$ from $S$ to $v$.
Write $|\pi|$ for the number of edges in path $\pi$.
Note that reachability in a finite directed graph always admits a simple path, so it suffices to union bound over $\mathcal{P}(S\to v)$.

\begin{lemma}[Reachability probability under Assumptions~\ref{assump:lpp} and \ref{assump:live-edge}]
\label{lem:reachability}
Assume Assumptions~\ref{assump:lpp} and \ref{assump:live-edge}.
For any seed set $S$ and node $v\notin S$,
\[
\mathbb{P}\bigl(z_{v,\infty}(S)=1\bigr)
=\mathbb{P}\bigl(v\in \mathrm{Reach}_{G_L}(S)\bigr)
\le
\sum_{\pi\in \mathcal{P}(S\to v)} \varepsilon^{|\pi|}
\le
\varepsilon\,|\mathcal{P}(S\to v)|.
\]
\end{lemma}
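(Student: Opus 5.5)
The equality is immediate from the live-edge representation in Assumption~\ref{assump:live-edge}, which gives $z_{v,\infty}(S)=\mathbb{I}(v\in \mathrm{Reach}_{G_L}(S))$ pointwise; taking expectations yields the first identity. The substance is therefore the two inequalities. I would prove both by turning the reachability event into a union of path events in the fixed graph $G$, applying a union bound, and evaluating each path probability through edge independence.

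\textbf{Step 1 (event inclusion).} Suppose $v\in\mathrm{Reach}_{G_L}(S)$ with $v\notin S$. Then some directed walk in $G_L$ runs from a node of $S$ to $v$. Take a shortest such walk. It is simple, so it does not revisit any vertex. Truncate it at its last visit to $S$; the resulting walk starts in $S$ and is still simple. Its edges lie in $E_L\subseteq E$, so it is a path $\pi\in\mathcal{P}(S\to v)$ all of whose edges are live. Hence
\[
\{v\in\mathrm{Reach}_{G_L}(S)\}\subseteq\bigcup_{\pi\in\mathcal{P}(S\to v)}\{L_e=1\ \forall e\in\pi\}.
\]
Whether $\mathcal{P}(S\to v)$ allows intermediate vertices in $S$ does not matter, because the truncated path belongs to the smaller class in either case.

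\textbf{Step 2 (union bound and independence).} A simple path has distinct edges, and the $L_e$ are independent Bernoulli$(p_e)$ with $p_e\le\varepsilon$ by Assumption~\ref{assump:lpp}. Therefore
\[
\mathbb{P}(L_e=1\ \forall e\in\pi)=\prod_{e\in\pi}p_e\le\varepsilon^{|\pi|}.
\]
Summing over $\pi\in\mathcal{P}(S\to v)$ by the union bound gives the middle inequality. The set $\mathcal{P}(S\to v)$ is finite because $G$ is finite and the paths are simple, so the sum is well defined.

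\textbf{Step 3 (final inequality).} Since $v\notin S$, every $\pi\in\mathcal{P}(S\to v)$ has $|\pi|\ge1$. Because $\varepsilon\le1$ (a probability bound, and $\varepsilon\ll1$), we get $\varepsilon^{|\pi|}\le\varepsilon$. Summing over the paths gives $\varepsilon\,|\mathcal{P}(S\to v)|$. If $\mathcal{P}(S\to v)$ is empty, both sides are zero and the bound is trivial.

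The argument is elementary. The only point needing care is Step 1: the reduction to \emph{simple} paths, which relies on shortest-walk minimality. Simplicity is exactly what guarantees the edges on a path are distinct, and hence that the independence product in Step 2 is valid.
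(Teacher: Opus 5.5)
Your proposal is correct and follows the paper's proof essentially step for step: you use the live-edge representation for the equality, include the reachability event in the union of ``all edges of a simple path $\pi\in\mathcal{P}(S\to v)$ are live'' events, apply a union bound with edge independence to get $\varepsilon^{|\pi|}$, and then use $|\pi|\ge 1$. Your shortest-walk justification for restricting to simple paths and your explicit remark that $\varepsilon\le 1$ only spell out details the paper takes for granted.
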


\begin{proof}
Under the live-edge model, the event $\{v\in \mathrm{Reach}_{G_L}(S)\}$ implies that there exists at least one simple directed path
$\pi\in\mathcal{P}(S\to v)$ such that all edges on $\pi$ are live.
By the union bound,
\[
\mathbb{P}\bigl(v\in \mathrm{Reach}_{G_L}(S)\bigr)
\le
\sum_{\pi\in\mathcal{P}(S\to v)} \mathbb{P}(\pi \text{ is live}).
\]
Independence of edges yields $\mathbb{P}(\pi \text{ is live})=\prod_{e\in\pi} p_e\le \varepsilon^{|\pi|}$ by Assumption~\ref{assump:lpp}.
Since $v\notin S$, any path has $|\pi|\ge 1$, hence $\varepsilon^{|\pi|}\le \varepsilon$ and the final inequality follows.
\end{proof}

We also need a joint reachability bound for two distinct non-seed nodes.

\begin{lemma}[Joint reachability is second-order]
\label{lem:joint-reachability}
Assume Assumptions~\ref{assump:lpp} and \ref{assump:live-edge}.
Fix $S$ and two distinct nodes $v\neq w$ with $v,w\notin S$.
Then
\[
\mathbb{P}\bigl(z_{v,\infty}(S)=1,\ z_{w,\infty}(S)=1\bigr)
\le
\varepsilon^2\,|\mathcal{P}(S\to v)|\,|\mathcal{P}(S\to w)|.
\]
\end{lemma}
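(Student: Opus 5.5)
The plan is to mirror the proof of Lemma~\ref{lem:reachability}, but union bound over \emph{pairs} of paths, with a little care for shared edges. Under Assumption~\ref{assump:live-edge}, the event $\{z_{v,\infty}(S)=1,\ z_{w,\infty}(S)=1\}$ means that $v$ and $w$ both lie in $\mathrm{Reach}_{G_L}(S)$. Since reachability in a finite directed graph is always witnessed by a simple path, the event implies that there exist $\pi\in\mathcal{P}(S\to v)$ and $\pi'\in\mathcal{P}(S\to w)$ that are \emph{simultaneously} live. The union bound over the finite set of pairs then gives
\[
\mathbb{P}\bigl(z_{v,\infty}(S)=1,\ z_{w,\infty}(S)=1\bigr)\le \sum_{\pi\in\mathcal{P}(S\to v)}\ \sum_{\pi'\in\mathcal{P}(S\to w)} \mathbb{P}(\pi\text{ and }\pi'\text{ both live}).
\]
The result follows once each summand is shown to be at most $\varepsilon^2$.

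For a fixed pair, the joint event is that every edge of $E(\pi)\cup E(\pi')$ is live. The paths may share edges, so the two events are positively correlated and I cannot simply multiply $\mathbb{P}(\pi\text{ live})\,\mathbb{P}(\pi'\text{ live})$. Instead I use independence of the $L_e$ across \emph{distinct} edges:
\[
\mathbb{P}(\pi\text{ and }\pi'\text{ both live})=\prod_{e\in E(\pi)\cup E(\pi')}p_e .
\]
Every factor satisfies $p_e\le\varepsilon\le 1$ by Assumption~\ref{assump:lpp}. Hence the product is at most $\varepsilon^{m}$, where $m:=|E(\pi)\cup E(\pi')|$, and it suffices to show $m\ge 2$.

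The step needing care is exactly this overlap issue, and it is where the hypotheses $v\neq w$ and $v,w\notin S$ enter. Since $v\notin S$, the path $\pi$ has at least one edge, and its last edge has head $v$. Likewise, the last edge of $\pi'$ has head $w$. Because $v\neq w$, these two terminal edges have different heads and so are distinct elements of $E(\pi)\cup E(\pi')$. This gives $m\ge 2$, so each summand is at most $\varepsilon^2$. Summing over the $|\mathcal{P}(S\to v)|\,|\mathcal{P}(S\to w)|$ pairs yields the claimed bound.

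I expect no further obstacle; the only subtlety is avoiding the false product-of-marginals step when paths overlap, which the distinct-terminal-edge argument handles. This argument also uses implicitly that $\varepsilon\le 1$, which is automatic since the $p_e$ are probabilities and we may replace $\varepsilon$ by $\min(\varepsilon,1)$.
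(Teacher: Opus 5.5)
Your proposal is correct and matches the paper's proof essentially step for step. Both use a union bound over pairs $(\pi,\pi')\in\mathcal{P}(S\to v)\times\mathcal{P}(S\to w)$, independence of the $L_e$ over the distinct edges of $E(\pi)\cup E(\pi')$, and the fact that this union has at least two edges because $v\neq w$ and $v,w\notin S$. Your distinct-terminal-edge argument for $m\ge 2$ is simply a direct form of the paper's contradiction argument, which notes that a one-edge union would force both paths to be that same edge and to end at the same node.
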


\begin{proof}
Under the live-edge model, the event $\{z_{v,\infty}(S)=1,\ z_{w,\infty}(S)=1\}$ implies that there exist
$\pi_v\in\mathcal{P}(S\to v)$ and $\pi_w\in\mathcal{P}(S\to w)$ such that all edges in $\pi_v\cup \pi_w$ are live.
By the union bound,
\[
\mathbb{P}(z_{v,\infty}=1,\ z_{w,\infty}=1)
\le
\sum_{\pi_v\in\mathcal{P}(S\to v)}\sum_{\pi_w\in\mathcal{P}(S\to w)}
\mathbb{P}\bigl(\pi_v \text{ and } \pi_w \text{ are live}\bigr).
\]
Since edges are independent,
\[
\mathbb{P}\bigl(\pi_v \text{ and } \pi_w \text{ are live}\bigr)
=
\prod_{e\in \pi_v\cup \pi_w} p_e
\le
\varepsilon^{|\pi_v\cup \pi_w|}.
\]
Because $v\neq w$ and $v,w\notin S$, the union $\pi_v\cup \pi_w$ contains at least two distinct edges:
each path must contain at least one edge, and if $|\pi_v\cup \pi_w|=1$ then both paths would consist of the same single edge and thus terminate at the same node, contradicting $v\neq w$.
Therefore $|\pi_v\cup \pi_w|\ge 2$ and $\varepsilon^{|\pi_v\cup \pi_w|}\le \varepsilon^2$.
Summing over $(\pi_v,\pi_w)$ yields the claimed bound.
\end{proof}

\subsection{Proof of Lemma~\ref{lem:moment-control}}
\label{app:proof-moment-control}

\begin{proof}[Proof of Lemma~\ref{lem:moment-control}]
Assume Assumptions~\ref{assump:lpp} and \ref{assump:live-edge}.
Fix $S$ with $|S|\le K$ and node $i$.
Write $A_j(S):=|\mathcal{P}(S\to j)|$ for the number of simple directed paths from $S$ to $j$.

\paragraph{First moment.}
By definition,
\[
U_i^\pm(S)=\sum_{j\in N_i^\pm\setminus S} z_{j,\infty}(S),
\qquad
\mathbb{E}[U_i^\pm(S)]=\sum_{j\in N_i^\pm\setminus S}\mathbb{P}\bigl(z_{j,\infty}(S)=1\bigr).
\]
Applying Lemma~\ref{lem:reachability} gives
\[
\mathbb{E}[U_i^\pm(S)]
\le
\sum_{j\in N_i^\pm\setminus S}\varepsilon\,A_j(S)
=
\varepsilon\,D_i^\pm(G,S),
\qquad
D_i^\pm(G,S):=\sum_{j\in N_i^\pm\setminus S}A_j(S).
\]
Since $G$ is finite and $|S|\le K$, the quantity $D_i^\pm(G,S)$ is finite.
Define
\(
D_i^\pm(G,K):=\max_{S\in\mathcal{S}_K}D_i^\pm(G,S)
\),
which depends only on $(G,K)$ and the exposure neighborhood $N_i^\pm$.
Then $\mathbb{E}[U_i^\pm(S)]\le D_i^\pm(G,K)\,\varepsilon$.

\paragraph{Second factorial moment.}
Recall $(U)_2=U(U-1)=\sum_{j\neq k}\mathbb{I}(j \text{ active})\mathbb{I}(k \text{ active})$.
Thus
\[
\mathbb{E}\bigl[(U_i^\pm(S))_2\bigr]
=
\sum_{\substack{j,k\in N_i^\pm\setminus S\\ j\neq k}}
\mathbb{P}\bigl(z_{j,\infty}(S)=1,\ z_{k,\infty}(S)=1\bigr).
\]
Applying Lemma~\ref{lem:joint-reachability} yields
\[
\mathbb{E}\bigl[(U_i^\pm(S))_2\bigr]
\le
\sum_{j\neq k}\varepsilon^2\,A_j(S)A_k(S)
=
\varepsilon^2\,C_i^\pm(G,S),
\]
where
\(
C_i^\pm(G,S):=\sum_{j\neq k\in N_i^\pm\setminus S}A_j(S)A_k(S)
\).
Define
\(
C_i^\pm(G,K):=\max_{S\in\mathcal{S}_K}C_i^\pm(G,S)
\),
which is finite and depends only on $(G,K)$ and the exposure neighborhood.
Then $\mathbb{E}[(U_i^\pm(S))_2]\le C_i^\pm(G,K)\varepsilon^2$.

\paragraph{Variance.}
Since $K_i^\pm(S)=|N_i^\pm\cap S|+U_i^\pm(S)$ with deterministic $|N_i^\pm\cap S|$, we have $\mathrm{Var}(K_i^\pm(S))=\mathrm{Var}(U_i^\pm(S))$.
Also $U^2=U+(U)_2$, hence
\[
\mathrm{Var}(U)\le \mathbb{E}[U^2]=\mathbb{E}[U]+\mathbb{E}[(U)_2]\le D_i^\pm(G,K)\varepsilon + C_i^\pm(G,K)\varepsilon^2 = O(\varepsilon),
\]
uniformly over $S\in\mathcal{S}_K$.
\end{proof}

\subsection{Proof of Theorem~\ref{thm:exposure-compression}: second-order moment implication}
\label{app:proof-exposure-secondorder}

\begin{proof}[Proof of Theorem~\ref{thm:exposure-compression}]
The bound is exactly the second-factorial-moment part of Lemma~\ref{lem:moment-control}
(with constants $C_i^\pm(G,K)$ defined there), under Assumption~\ref{assump:live-edge} in addition to Assumption~\ref{assump:lpp}. It yields that the approximation error is $O(\varepsilon^2)$,
which is the statement \eqref{eq:exposure-second-order}.
\end{proof}

\subsection{Proofs of Corollary~\ref{cor:welfare-approx} and Theorems~\ref{thm:partial-id}--\ref{thm:point-id}}
\label{app:proof-id-region}

\begin{proof}[Proof of Corollary~\ref{cor:welfare-approx}]
Sum the nodewise approximation over $i\in V$ and use the triangle inequality:
\begin{align*}
|F(S)-\widetilde F(S)|
&=
\Bigl|
\sum_{i\in V}\mathbb{E}[Y_i(z_\infty(S))]
-
\sum_{i\in V}\Bigl(\alpha_i\mathbb{I}(i\in S)+f_i^+(k_i^+(S))-f_i^-(k_i^-(S))\Bigr)
\Bigr|\\
&\le
\sum_{i\in V}
\Bigl|
\mathbb{E}[Y_i(z_\infty(S))]
-
\alpha_i\mathbb{I}(i\in S)
-
f_i^+(k_i^+(S))
+
f_i^-(k_i^-(S))
\Bigr|.
\end{align*}
We obtain
\[
|F(S)-\widetilde F(S)|
\le
\sum_{i\in V}\Bigl(
\frac{\kappa_i^+}{2}C_i^+(G,K)\varepsilon^2
+
\frac{\kappa_i^-}{2}C_i^-(G,K)\varepsilon^2
\Bigr),
\]
which is \eqref{eq:welfare-approx}.
The statement about $\rho$-approximate maximizers follows by adding/subtracting $\widetilde F$ and using the bound uniformly over $S$.
\end{proof}

\begin{proof}[Proof of Theorem~\ref{thm:partial-id}]
Theorem~\ref{thm:partial-id} is immediate from Corollary~\ref{cor:welfare-approx} by defining
\(
B_{\mathrm{str}}:=\frac12\sum_{i\in V}(\kappa_i^+C_i^+(G,K)+\kappa_i^-C_i^-(G,K))
\)
and observing that \eqref{eq:welfare-approx} implies
\(
|F(S)-\widetilde F(S)|\le B_{\mathrm{str}}\varepsilon^2
\),
hence $F(S)\in \mathcal{I}(S)$.
\end{proof}

\begin{proof}[Proof of Theorem~\ref{thm:point-id}]
Under condition (i), $\kappa_i^+=\kappa_i^-=0$ implies that $\Delta f_i^\pm(t)$ are constant in $t$,
hence $f_i^\pm$ are affine on $\mathbb{Z}_{\ge 0}$ and $\bar f_i^\pm$ are affine on $[0,\infty)$.
For affine functions, Jensen's inequality holds with equality, so
\(
\mathbb{E}[f_i^\pm(K_i^\pm(S))]=f_i^\pm(\mathbb{E}[K_i^\pm(S)]).
\)
Thus the nodewise approximation is exact, yielding $F(S)=\widetilde F(S)$.

Under condition (ii), $(U_i^\pm(S))_2=0$ almost surely implies $U_i^\pm(S)\in\{0,1\}$ almost surely.
Because $\bar f_i^\pm$ is linear on each unit interval and $U$ is Bernoulli, we have exact equality
\[
\mathbb{E}\bigl[f_i^\pm(|N_i^\pm\cap S|+U_i^\pm(S))\bigr]
=
\bar f_i^\pm\bigl(|N_i^\pm\cap S|+\mathbb{E}[U_i^\pm(S)]\bigr)
=f_i^\pm(k_i^\pm(S)).
\]
Hence again the nodewise approximation is exact and $F(S)=\widetilde F(S)$.
\end{proof}

\subsection{Proof of Theorem~\ref{thm:estimation} (shape-constrained curve estimation)}
\label{app:proof-shape}

We provide a self-contained proof of a conservative risk bound that matches the statement of Theorem~\ref{thm:estimation}.
The proof is standard for discrete shape-restricted regression and relies on the fact that projection onto a closed convex set is nonexpansive.

\begin{proof}[Proof of Theorem~\ref{thm:estimation}]
We present the argument for a single stratum $r$ and a single response curve; the extension to $(f_r^+,f_r^-)$ is additive.
For notational simplicity, consider estimating a discretely concave nondecreasing sequence
\(
\theta^\star=(\theta^\star_0,\dots,\theta^\star_B)\in\mathbb{R}^{B+1}
\)
from observations $(T_m,Y_m)$, where $T_m\in\{0,\dots,B\}$ is the exposure bin and $Y_m\in[-1,1]$.
Assume the conditional mean satisfies $\mathbb{E}[Y_m\mid T_m=t]=\theta^\star_t$ and samples are independent.
Let $\mathcal{C}\subset\mathbb{R}^{B+1}$ be the closed convex set of nondecreasing discretely concave sequences with $\theta_0=0$.
The shape-constrained least-squares estimator is
\[
\widehat\theta := \arg\min_{\theta\in\mathcal{C}} \sum_{m=1}^{N_r} w_m\,(Y_m-\theta_{T_m})^2,
\]
with weights $w_m$ (uniform under experiments; IPS/DR under logs).

\paragraph{Reduction to bin means.}
Let $I_t:=\{m: T_m=t\}$ and define the weighted bin mean
\[
\widehat\mu_t:=\frac{\sum_{m\in I_t}w_m Y_m}{\sum_{m\in I_t}w_m},
\qquad
\widehat W_t:=\sum_{m\in I_t}w_m.
\]
Then the objective can be written as
\[
\sum_{m=1}^{N_r} w_m\,(Y_m-\theta_{T_m})^2
=
\sum_{t=0}^B \sum_{m\in I_t} w_m (Y_m-\widehat\mu_t)^2
+
\sum_{t=0}^B \widehat W_t\,(\widehat\mu_t-\theta_t)^2.
\]
The first term does not depend on $\theta$, hence
\[
\widehat\theta
=
\arg\min_{\theta\in\mathcal{C}}
\sum_{t=0}^B \widehat W_t\,(\widehat\mu_t-\theta_t)^2.
\]
Thus $\widehat\theta$ is the projection of $\widehat\mu=(\widehat\mu_0,\dots,\widehat\mu_B)$ onto $\mathcal{C}$
under the weighted inner product $\langle a,b\rangle_{\widehat W}:=\sum_{t}\widehat W_t a_t b_t$.

\paragraph{Nonexpansiveness of projection.}
Because $\mathcal{C}$ is closed and convex, the weighted projection is nonexpansive:
for any $\theta^\star\in\mathcal{C}$,
\[
\|\widehat\theta-\theta^\star\|_{\widehat W}^2
\le
\|\widehat\mu-\theta^\star\|_{\widehat W}^2.
\]
Taking expectation and using $\mathbb{E}[\widehat\mu_t]=\theta^\star_t$ yields
\[
\mathbb{E}\|\widehat\theta-\theta^\star\|_{\widehat W}^2
\le
\mathbb{E}\|\widehat\mu-\theta^\star\|_{\widehat W}^2
=
\sum_{t=0}^B \mathbb{E}\bigl[\widehat W_t(\widehat\mu_t-\theta^\star_t)^2\bigr].
\]

\paragraph{Bounding the bin-mean risk.}
Conditioning on $\widehat W_t$ and using $|Y_m|\le 1$ implies
$\mathrm{Var}(\widehat\mu_t\mid \widehat W_t)\le 1/\widehat W_t$ (up to an absolute constant).
Hence
\[
\mathbb{E}\bigl[\widehat W_t(\widehat\mu_t-\theta^\star_t)^2\bigr]
=
\mathbb{E}\bigl[\widehat W_t\,\mathrm{Var}(\widehat\mu_t\mid \widehat W_t)\bigr]
\lesssim 1.
\]
If each bin has effective sample size at least $N_{\mathrm{eff}}$, i.e. $\widehat W_t\gtrsim N_{\mathrm{eff}}/B$ uniformly over $t$,
then the (unweighted) Euclidean risk satisfies
\[
\mathbb{E}\|\widehat\theta-\theta^\star\|_2^2
\;\lesssim\;
\sum_{t=0}^B \frac{1}{\widehat W_t}
\;\lesssim\;
\frac{B}{N_{\mathrm{eff}}}.
\]
Applying the same argument separately to $f_r^+$ and $f_r^-$ and summing yields
\[
\mathbb{E}\bigl[\|\widehat f_r^+-f_r^+\|_2^2+\|\widehat f_r^--f_r^-\|_2^2\bigr]
=
\widetilde O\Bigl(\frac{B^+ + B^-}{N_{\mathrm{eff}}}\Bigr),
\]
up to logarithmic factors from high-probability-to-expectation conversions.
Under IPS weighting, $N_{\mathrm{eff}}$ and the constant depend on second moments of weights, which we summarize as $\mathrm{Var}(\mathrm{IPS)}$. Proof has been completed.
\end{proof}

\subsection{Proof of Proposition~\ref{prop:mc-concentration} (Bernstein bound)}
\label{app:proof-bernstein}

\begin{proof}[Proof of Proposition~\ref{prop:mc-concentration}]
Fix $S$ and $i$, and write $X_r:=K_{i,r}^\pm(S)$.
By construction, $\{X_r\}_{r=1}^R$ are i.i.d., $0\le X_r\le M_i^\pm$, and $\mathbb{E}[X_r]=k_i^\pm(S)$.
Set $\bar X:=\frac{1}{R}\sum_{r=1}^R X_r=\widehat k_i^\pm(S)$.
Bernstein's inequality for bounded i.i.d.\ variables states that for any $t>0$,
\[
\mathbb{P}\big(|\bar X-\mathbb{E}X_1|>t\big)
\le
2\exp\left(
-\frac{Rt^2}{2\mathrm{Var}(X_1)+\frac{2}{3}M_i^\pm t}
\right).
\]
Set the right-hand side equal to $\delta$ and solve for $t$ to obtain \eqref{eq:mc-bernstein}.
The final scaling under Lemma~\ref{lem:moment-control} follows from $\mathrm{Var}(K_i^\pm(S))=O(\varepsilon)$.
\end{proof}

\subsection{Proof of Theorem~\ref{thm:fixedS-finite} (finite-sample welfare estimation bound)}
\label{app:proof-fixedS-finite}

\begin{proof}[Proof of Theorem~\ref{thm:fixedS-finite}]
Fix $S\in\mathcal{S}_K$.
By adding and subtracting $\widetilde F(S)$, the triangle inequality gives
\[
|\widehat F(S)-F(S)|
\le |F(S)-\widetilde F(S)| + |\widehat F(S)-\widetilde F(S)|,
\]
which is \eqref{eq:fixedS-decomp}.

For the second term, write
\begin{align*}
\widehat F(S)-\widetilde F(S)
&=
\sum_{i\in V}\Bigl(
(\widehat\alpha_i-\alpha_i)\mathbb{I}(i\in S)
+
\widehat f_i^+(\widehat k_i^+(S)) - f_i^+(k_i^+(S))
-
\widehat f_i^-(\widehat k_i^-(S)) + f_i^-(k_i^-(S))
\Bigr).
\end{align*}
Insert and subtract $\widehat f_i^\pm(k_i^\pm(S))$ and apply the triangle inequality:
\begin{align*}
|\widehat f_i^+(\widehat k_i^+)-f_i^+(k_i^+)| 
&\le
|\widehat f_i^+(k_i^+)-f_i^+(k_i^+)| + |\widehat f_i^+(\widehat k_i^+)-\widehat f_i^+(k_i^+)|,\\
|\widehat f_i^-(\widehat k_i^-)-f_i^-(k_i^-)| 
&\le
|\widehat f_i^-(k_i^-)-f_i^-(k_i^-)| + |\widehat f_i^-(\widehat k_i^-)-\widehat f_i^-(k_i^-)|,
\end{align*}
where we abbreviate $k_i^\pm=k_i^\pm(S)$ and $\widehat k_i^\pm=\widehat k_i^\pm(S)$.
Under the Lipschitz assumption on $f_i^\pm$ (and hence on $\widehat f_i^\pm$ on the interpolation grid),
\[
|\widehat f_i^\pm(\widehat k_i^\pm)-\widehat f_i^\pm(k_i^\pm)|
\le
L_i^\pm\,|\widehat k_i^\pm-k_i^\pm|.
\]
Taking expectations and summing over $i$ yields \eqref{eq:fixedS-bound}.

Finally, combine:
(i) Corollary~\ref{cor:welfare-approx} to bound the structural term by $O(\varepsilon^2)$;
(ii) Theorem~\ref{thm:estimation} to control the response-curve estimation errors (using standard inequalities relating pointwise error to $\ell_2$ error on a finite grid);
(iii) Proposition~\ref{prop:mc-concentration} and Lemma~\ref{lem:moment-control} to obtain $\mathbb{E}|\widehat k_i^\pm-k_i^\pm|=O(\sqrt{\varepsilon/R})$.
Collecting terms gives \eqref{eq:fixedS-rate}.
\end{proof}

\subsection{Proof of Theorem~\ref{thm:fixedS-asymp} (asymptotic analysis)}
\label{app:proof-fixedS-asymp}

\begin{proof}[Proof of Theorem~\ref{thm:fixedS-asymp}]
Fix $S$.
Under the conditions of Theorem~\ref{thm:estimation}, as $N_{\mathrm{eff}}\to\infty$,
the shape-constrained estimators $(\widehat\alpha_i,\widehat f_i^\pm)$ converge (in $\ell_2$ on the finite grid) to $(\alpha_i,f_i^\pm)$ in probability.
Under Proposition~\ref{prop:mc-concentration}, $\widehat k_i^\pm(S)\to k_i^\pm(S)$ in probability as $R\to\infty$.
By continuity of the piecewise-linear interpolation, $\widehat f_i^\pm(\widehat k_i^\pm(S))\to f_i^\pm(k_i^\pm(S))$ in probability.
Summing over finitely many $i\in V$ yields
\(
\widehat F(S)\xrightarrow[]{p}\widetilde F(S)
\),
which proves the first claim.

Partial identification $F(S)\in[\widetilde F(S)\pm O(\varepsilon^2)]$ follows from Theorem~\ref{thm:partial-id}.
If additionally $\varepsilon=\varepsilon_N\to 0$, then the structural gap $\sup_S|F(S)-\widetilde F(S)|=O(\varepsilon^2)\to 0$,
so Slutsky's theorem implies $\widehat F(S)\xrightarrow[]{p}F(S)$.
\end{proof}

\subsection{Proof of Theorem~\ref{thm:end2end} and Corollary~\ref{cor:opt-rate}}
\label{app:proof-optimization}

\begin{proof}[Proof of Theorem~\ref{thm:end2end}]
Let $S^\star\in\arg\max_{S\in\mathcal{S}_K}F(S)$.
By definition of $\Delta_{\mathrm{str}}$ and $\Delta_{\mathrm{est}}$, for any $S$,
\[
F(S)\ge \widetilde F(S)-\Delta_{\mathrm{str}},
\qquad
\widetilde F(S)\ge \widehat F(S)-\Delta_{\mathrm{est}},
\qquad
\widehat F(S)\ge \widetilde F(S)-\Delta_{\mathrm{est}}.
\]
Therefore,
\begin{align*}
F(\widehat S)
&\ge \widetilde F(\widehat S)-\Delta_{\mathrm{str}}
\ge \widehat F(\widehat S)-\Delta_{\mathrm{est}}-\Delta_{\mathrm{str}}
\ge \rho \max_{S\in\mathcal{S}_K}\widehat F(S)-\Delta_{\mathrm{est}}-\Delta_{\mathrm{str}}\\
&\ge \rho\,\widehat F(S^\star)-\Delta_{\mathrm{est}}-\Delta_{\mathrm{str}}
\ge \rho\bigl(\widetilde F(S^\star)-\Delta_{\mathrm{est}}\bigr)-\Delta_{\mathrm{est}}-\Delta_{\mathrm{str}}\\
&= \rho\,\widetilde F(S^\star) -(1+\rho)\Delta_{\mathrm{est}}-\Delta_{\mathrm{str}}
\ge \rho\bigl(F(S^\star)-\Delta_{\mathrm{str}}\bigr)-(1+\rho)\Delta_{\mathrm{est}}-\Delta_{\mathrm{str}}\\
&= \rho \max_{S\in\mathcal{S}_K}F(S) -(1+\rho)\Delta_{\mathrm{est}}-(1+\rho)\Delta_{\mathrm{str}},
\end{align*}
which is equivalent to \eqref{eq:end2end} as stated.
The final claims about $\Delta_{\mathrm{str}}$ and $\Delta_{\mathrm{est}}$ follow from Corollary~\ref{cor:welfare-approx} and Theorem~\ref{thm:estimation}.
\end{proof}

\begin{proof}[Proof of Corollary~\ref{cor:opt-rate}]
Under the stated uniform conditions, Theorem~\ref{thm:fixedS-finite} implies that, uniformly over $S\in\mathcal{S}_K$,
\[
|\widehat F(S)-F(S)|
\le
O(\varepsilon^2)
+
\widetilde O\!\left(\sqrt{\frac{B^+ + B^-}{N_{\mathrm{eff}}}}\right)
+
O\!\left(\sqrt{\frac{\varepsilon}{R}}\right)
+
\mathrm{Var}(\mathrm{IPS}).
\]
Hence $\Delta_{\mathrm{est}}+\Delta_{\mathrm{str}}$ is bounded by the right-hand side (up to constants).
Plugging this bound into Theorem~\ref{thm:end2end} completes the proof.
\end{proof}

\appendix
\section{Implementation Details and Code Structure}
For each dataset, we preprocess the raw network to obtain a directed interaction graph and the associated metadata used consistently in all experiments.
\subsection{RQ1 (Effectiveness): main evaluation pipeline.}
RQ1 evaluates whether CIM selects seed sets that yield higher steady-state causal welfare than influence-maximization baselines under the same budget.
For each dataset and a fixed seed budget $K$, we run each method to obtain a seed set $S$.
CIM follows a two-stage procedure: (i) learn exposure--response functions from logged diffusion-outcome data with shape constraints (monotonicity and concavity), and (ii) perform greedy seed selection by maximizing the learned exposure-based surrogate objective.
Baselines (e.g., Degree, Random, and a greedy IM-style method) construct $S$ using their respective selection rules under the same $K$.
We then evaluate each selected seed set by simulating diffusion until convergence and computing the resulting steady-state welfare, repeating multiple runs and averaging to reduce Monte Carlo variance.
In addition to welfare, we report wall-clock runtime for seed selection to quantify efficiency.

\subsection{RQ2 (Robustness): stress tests on estimation and propagation.}
RQ2 tests whether CIM remains reliable when (a) outcome estimation is noisy and (b) the weak-propagation assumption is partially violated.
For outcome noise, we perturb the observed node-level outcomes used in the exposure--response learning stage by injecting additive noise of increasing magnitude, then re-fit the response functions and re-run seed selection.
For assumption violations, we increase the propagation strength in the diffusion simulator (e.g., scaling edge activation probabilities), which amplifies higher-order diffusion effects and weakens the validity of the low-propagation approximation.
For each perturbation level, we evaluate the selected seeds via steady-state diffusion simulation and summarize robustness using the oracle gap (difference to an oracle benchmark) and/or relative performance against baselines.
A robust method should exhibit bounded and smooth degradation as noise or propagation strength increases.

\subsection{RQ3 (Sensitivity and ablation): budget sensitivity and component necessity.}
RQ3 examines which design choices drive CIM's performance and how sensitive it is to key parameters.
We first conduct \emph{budget sensitivity} by varying the seed budget $K$ over a range and repeating the full CIM pipeline for each $K$, then comparing the resulting welfare curves against baselines.
We additionally perform \emph{ablation} by disabling key components of CIM one at a time (e.g., removing shape constraints in response learning, replacing the exposure-based surrogate with a reach-based proxy, or simplifying the optimization step) and re-running the same evaluation.
Performance drops under ablation identify which components are essential, while the $K$-sensitivity curves reveal how CIM scales as saturation effects become more prominent under larger budgets.

\end{document}